\newtheorem{definition}{Definition}
\newtheorem{lemma}{Lemma}
\newtheorem{corollary}{Corollary}
\newtheorem{theorem}{Theorem}
\newtheorem{proposition}{Proposition}
\newtheorem{open-probl}{Open problem}
\newtheorem{exampleA}{Example}
\newtheorem{remarkA}{Remark}
\newenvironment{proof}[1][Proof.]
    {\addvspace{\bigskipamount}\noindent\emph{#1} }
    {\par\addvspace{\bigskipamount}}
\newcommand{\qed}{\ifmmode\raisebox{3.3pt}{\fbox{}}\else{\ifhmode
\unskip\fi\nobreak\hfil\penalty50\hskip1em\null\nobreak\hfil\raisebox
{3.3pt}{\fbox{}}\parfillskip=0pt\finalhyphendemerits=0\endgraf}\fi}
\title{Quantum finite automata and linear context-free languages: a decidable problem
 \thanks{The third author acknowlegdes the support of fundings ``AST 2009'' of the University of Rome ``La Sapienza''.}}
 \author{Alberto Bertoni \\
        Dipartimento di Scienze dell'Informazione, \\
Universit\`a degli Studi di Milano\\
  Via Comelico 39, 20135 Milano, Italy\\
 \and 
 Christian Choffrut \\
        Laboratoire LIAFA, Universit\'e de Paris 7\\ 
          2, pl. Jussieu, 75251 Paris Cedex 05
   \and 
 Flavio D'Alessandro \\
        Dipartimento di Matematica, \\
``La Sapienza'' Universit\`a di Roma  \\
  Piazzale Aldo Moro 2, 00185 Roma,  Italy.  }
 \date{}
 \newcommand{\G}{C}
 \newcommand{\closure}{\mathbf Cl}
 \newcommand{\Closure}[1]{{\mathbf Cl}(#1)}
\newcommand{\N}{\mathbb{N}}
\newcommand{\Q}{\mathbb{Q}}
\newcommand{\R}{\mathbb{R}}
\newcommand{\Hist}{Hist}
\newlength{\transwidth}
\newcommand{\trans}[1]{\settowidth{\transwidth}{$\;_{#1}\;$}
\stackrel{#1}{\overrightarrow{\rule{\transwidth}{0ex}}}}
\begin{document}

\maketitle

\begin{abstract}
We consider the so-called measure once finite quantum automata model introduced by Moore and Crutchfield in 2000.
We show that given a language recognized by such a device and a linear context-free language, it is recursively decidable 
whether or not they have a nonempty intersection.
This extends a result of Blondel et al. which can be 
interpreted as solving the problem with the  free monoid in place of the family of linear 
context-free languages.
 
\end{abstract}

\section{Introduction}
Quantum finite automata or simply quantum  automata were introduced at the beginning of the
previous decade in \cite{moore-crutchfield} as a new model of language recognizer (cf. \cite{BMP03}). 
Numerous publications have 
ever since compared their decision properties
to those of the older model of probabilistic finite automata. 
Some undecidable problems 
for probabilistic finite automata turn out to be  
decidable for quantum finite automata.
The result in  \cite{blondel} which triggered our investigation can be viewed as asserting that
the intersection emptiness problem  of a language recognized by a finite 
quantum automaton with the free monoid is recursively decidable.
The present result concerns  the same problem 
where instead of the free monoid,  more generally a language belonging to some classical 
families of languages such as the context-free languages and the 
bounded semilinear  languages is considered. 

An ingredient of the proof in  \cite{blondel} consists of expressing
the emptiness problem in the first order theory 
of the reals and then to apply Tarski-Seidenberg quantifier elimination.
This is possible because an algebraic subset, i.e., 
a closed subset in the Zariski topology ${\cal A} \subseteq \R ^ {n}$,
is naturally associated to this intersection and even more miraculously
because this subset can be effectively computed (cf. also \cite{derksen}). 

Here we show that the (actually semi-)algebraicity of ${\cal A}$  still  holds
when considering not only the free monoid but more generally 
arbitrary context-free languages and bounded semilinear languages. 
Unfortunately, its effective construction is only guaranteed under stricter 
conditions such as the fact 
that the language is context-free and linear or is bounded semilinear. In 
particular, in the case of 
context-free languages, we are not able to settle the nonlinear case yet. 
 
 We now give a  more formal presentation of our work. 
The free monoid generated by the finite alphabet
$\Sigma$ is denoted by $\Sigma^*$.
The elements of $\Sigma^*$ are \emph{words}.
We consider all finite dimensional
vector spaces as  provided with the Euclidian norm $||\cdot||$.
A \emph{quantum automaton} is a quadruple ${\cal Q}=(s,\varphi,P,\lambda)$ 
where $s\in \R^{n}$
is a row-vector of unit norm,
$P$ is a projection of $\R^{n}$,  $\varphi$ is a representation of the free monoid 
$ \Sigma^*$  into the group
of \emph{orthogonal} $n\times n$-matrices in  $\R^{n\times n}$
and  the \emph{threshold}  $\lambda $  has value in $\R$.
We recall that a real matrix $M$ is orthogonal
if its inverse equals its transpose: $M^{-1}=M^{T}$.
We denote by $O_{n}$ the group of $n\times n$-orthogonal matrices.
We are mainly interested in effective properties
which require the quantum automaton
to be effectively given. 
We say that the quantum automaton is \emph{rational} if  all the coefficients 
of the components of the automaton 
are rational numbers,  i.e., $\varphi$ maps $ \Sigma^*$  into $\Q^{n\times n}$
and $\lambda\in \Q$. This hypothesis is not a  restriction 
since all we use for the proofs is the fact 
that the arithmetic operations and the comparison
are effective in the field of rational numbers.
This is the {``measure once''} model introduced 
 by Moore and Crutchfield in 2000  \cite{moore-crutchfield}.
%
For a real threshold $\lambda$, the languages recognized by  $\cal Q$ with strict and nonstrict threshold $\lambda$ are
$$|{\cal Q}_{>}|=\{w\in  \Sigma^*\mid ||s \varphi(w) P||>\lambda\}, \quad |{\cal Q}_{\geq}|=\{w\in  \Sigma^*\mid ||s \varphi(w) P||\geq\lambda\}$$
Blondel et al. in \cite{blondel} proved that the emptiness problem of 
$|{\cal Q}_{>}|$ is decidable and the emptiness problem of 
$|{\cal Q}_{\geq}|$ is undecidable. 
It is worth to remark that these results  are in contrast with the corresponding ones  for probabilistic finite automata. Indeed for this class of automata 
the above mentioned  problems   
are  both undecidable (see \cite{Paz}, Thm 6.17). 
It is also proven in \cite{j02} that the problems remain both undecidable  for 
  the {``measure many''} model of quantum automata, a model not computationally equivalent to the {``measure once''},  introduced 
 by Kondacs and Watrous in \cite{kw}.
  
The result of decidability proved in \cite{blondel} for quantum automata can be interpreted as saying that 
the emptiness problem of the intersection 
of a language accepted by a quantum automaton and 
the specific language $\Sigma^*$
is decidable. 
In other word, it falls into the category 
of issues asking for the decision status of the intersection
of two languages. It is known that such a problem 
is already undecidable at a very low level of the complexity
hierarchy of recursive 
languages, namely for linear context-free languages
to which  Post Correspondence Problem can be easily reduced.  
 
  A few words on the technique 
used in the above paper. Observe that, with the natural 
meaning of the notation $|{\cal Q}_{\leq}|$, the emptiness problem for languages  $|{\cal Q}_{>}|$
  is equivalent to the inclusion 
\begin{equation}
\label{eq:l}
 \Sigma^* \subseteq |{\cal Q}_{\leq}|
\end{equation}
Since the function $M \rightarrow ||s M P||$
is continuous, it is sufficient to prove that for all
matrices $M$ in the topological closure of 
$\varphi(\Sigma^*)$ the condition 
$||s M P||\leq \lambda$ holds.
The  nonemptiness is clearly semidecidable.
In order to prove that the emptiness is semidecidable 
the authors resort to two ingredients. 
They observe that the topological closure of the
monoid of matrices $\varphi(\Sigma^*)$
is algebraic, i.e., when considering 
the $n\times n$-entries of a matrix $M$ in the topological closure of $\varphi (\Sigma^*)$
as as many unknowns in the field 
of reals, they are precisely the zeros 
of a polynomial in $\R[x_{1,1}, \ldots, x_{n,n}]$.
This allows them to express the property (\ref{eq:l})  
in first-order logic of the field of reals.
The second ingredient consists of applying 
Tarski-Seidenberg quantifier elimination
 and Hilbert basis results,  which yields decidability.

   We generalize the problem by considering 
families of languages ${\cal L}$ instead of the fixed language
$\Sigma^*$. The  question we tackle is thus the following

\medskip

\noindent
{\sc $(L, Q)$ INTERSECTION}

\medskip

\noindent {\sc Input}:  a  language $L$ in a family of
languages ${\cal L}$ and a finite quantum automaton
${\cal Q}$. 

\medskip

\noindent {\sc Question}:  does $L\cap |{\cal Q}_{>}|=\emptyset$ hold?
 
\medskip Our main result shows that whenever 
${\cal L}$ is the family of linear context-free languages or
  is the family of bounded semilinear   languages,
and whenever the automaton is rational, the problem is decidable.
It can be achieved, not only because
the orthogonal matrices associated with 
$L$ are semialgebraic (a more general property than
algebraic,  which is defined by more general
first-order formulas), but also because these formulas 
can be  computed  ``in the limit''. 

We can prove the semialgebraicity of more general families 
of languages:  arbitrary subsemigroups
which is a trivial case and context-free languages
 which is less immediate.
 Finally we show that our main results are not 
 trivial since we can   exhibit
an example of a language which is the complement
of a    context-free language and 
whose set of matrices is  not semialgebraic. 

Some of the results of this paper will be presented at DLT 2013 \cite{dlt2013}.

\section{Preliminaries}
A quantum automaton  ${\cal Q}$ is a quadruple  $(s,\varphi, P,\lambda)$
where, as mentioned in the Introduction,  $s\in \R^{n}$
is a vector of unit norm,
$P$ is a projection of $\R^{n}$,  
$\varphi$ is a representation of the free monoid 
$ \Sigma^*$  into the group $O_n$
of orthogonal $n\times n$-matrices in  $\R^{n\times n}$.
The behaviour of ${\cal Q}$  heavily depends 
on the topological properties of the semigroup of matrices
$\varphi(\Sigma^*)$. This is why, before returning
to quantum automata, we first focus our attention
on these matrices for their own sake.

\subsection{Topology}

The following result is needed in the proof of the main theorem.
Though valid under weaker conditions,
it will be considered in the particular case of orthogonal
matrices. Given a subset $E$ of a finite dimensional vector space, we denote by 
$\closure(E)$ the topological closure 
for the topology induced by the Euclidian norm.
Given a $k$-tuple of matrices $(M_{1}, \ldots, M_{k})$, 
 denote by $f$  the $k$-ary product 
 $f(M_{1}, \ldots, M_{k})=M_{1} \cdots M_{k}$
 and extend the notation to subsets  $\rho$ 
 of $k$-tuples of matrices by posing 
 $f(\rho)=\{f(M_{1}, \ldots, M_{k})\mid (M_{1}, \ldots, M_{k}) \in \rho\}$.
The following result will be applied in several instances of this paper.
It says that because we are dealing with compact subsets,  the two operators
of matrix multiplication and  the topological
closure   commute. 
\begin{theorem}
\label{th:fundamental} 
Let $\cal C$ be  a compact subset of matrices  and let $\rho \subseteq {\cal C}^k$ 
be a $k$-ary relation. Then we have
$$
\Closure{ f(\rho)}= 
f( \Closure{\rho})
$$
\end{theorem}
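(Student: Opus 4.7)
The plan is to prove the two inclusions separately, using continuity of $f$ for one direction and compactness for the other. Throughout, view the space of $n\times n$-matrices as a finite dimensional Euclidean space, so that $\mathcal{C}^k$ inherits the product topology induced by the Euclidean norm, and note that $\mathcal{C}^k$ is compact as a finite product of compacts.

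First I would show $f(\Closure{\rho}) \subseteq \Closure{f(\rho)}$. This is the easy direction and relies only on continuity. Given $(M_1,\ldots,M_k) \in \Closure{\rho}$, pick a sequence of tuples $(M_1^{(m)},\ldots,M_k^{(m)}) \in \rho$ converging to it coordinatewise. Since matrix multiplication is continuous (it is polynomial in the entries), $f(M_1^{(m)},\ldots,M_k^{(m)}) \to f(M_1,\ldots,M_k)$; the left-hand side lies in $f(\rho)$, so the limit lies in $\Closure{f(\rho)}$.

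Second I would show the reverse inclusion $\Closure{f(\rho)} \subseteq f(\Closure{\rho})$. Here compactness is essential. Since $\mathcal{C}^k$ is compact and $\Closure{\rho} \subseteq \mathcal{C}^k$ is closed, $\Closure{\rho}$ is itself compact. The continuous image of a compact set is compact, so $f(\Closure{\rho})$ is compact, hence closed in the matrix space. Combining this with the trivial inclusion $f(\rho) \subseteq f(\Closure{\rho})$ and the fact that $\Closure{f(\rho)}$ is the smallest closed set containing $f(\rho)$, we conclude $\Closure{f(\rho)} \subseteq f(\Closure{\rho})$.

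There is no real obstacle here; the result is a routine general topology fact once the right setting is identified. The only point to watch is that the inclusion $\Closure{f(\rho)} \subseteq f(\Closure{\rho})$ genuinely requires compactness and would fail for arbitrary continuous $f$ on a noncompact domain (for instance, projecting the graph of $y = 1/x$ on $x > 0$ to the $x$-axis). In our intended applications $\mathcal{C}$ will be taken to be $O_n$, which is compact as a closed bounded subset of $\R^{n \times n}$, so the hypothesis is automatically satisfied for every $\rho \subseteq O_n^k$.
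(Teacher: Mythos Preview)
Your proof is correct and follows essentially the same approach as the paper: continuity of $f$ for the inclusion $f(\Closure{\rho})\subseteq\Closure{f(\rho)}$ and compactness of $\mathcal{C}^k$ for the reverse inclusion. The only minor difference is in the execution of the compactness step: the paper argues by explicit subsequence extraction (given $A\in\Closure{f(\rho)}$, lift to a sequence in $\rho$, pass to a convergent subsequence in $\mathcal{C}^k$, and use continuity), whereas you invoke the cleaner abstract fact that $f(\Closure{\rho})$, as the continuous image of a compact set, is closed and contains $f(\rho)$.
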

  
\begin{proof} Since the function $f$ is continuous, 
the inverse image of $\Closure{f(\rho)}$ is closed,
i.,e.,  $\Closure{\rho}\subseteq f^{-1}(\Closure{f(\rho)})$
holds which yields $f(\Closure{\rho})\subseteq \Closure{f(\rho)}$. 
Now we prove the opposite inclusion. Consider
an element $A\in \Closure{f(\rho)}$. It 
is the limit of a sequence $M_{1,n}\cdots M_{k,n}$
where $(M_{1,n}, \cdots,  M_{k,n})\in \rho$ for $n\geq 0$. 
Because $\cal C$ is a compact set, 
there exists a subsequence $(M_{1,n_{i}}, \cdots,  M_{k,n_{i}})\in \rho$, i.e., 
an infinite sequence of strictly increasing indices $n_{i}$ 
which converges to a limit point $(A_{1},\cdots, A_{k})\in \Closure{\rho}$. By continuity
we have $f(A_{1},\cdots, A_{k})=A$ which shows that $\Closure{f(\rho)}\subseteq f(\Closure{\rho})$. 
\qed\end{proof}

Consequently, if $\rho$ is a binary relation which is a direct product $\rho_{1}\times \rho_{2}$, 
we have $\Closure{\rho_{1} \rho_{2}}=f(\Closure{\rho_{1}\times \rho_{2}})$.
It is an elementary 
result of topology  that 
$\Closure{\rho_{1}\times \rho_{2}}=\Closure{\rho_{1} }\times\Closure{\rho_{2} }$ holds.
Because of 
$\Closure{\rho_{1} \rho_{2}}=f(\Closure{\rho_{1}\times \rho_{2}})
=f(\Closure{\rho_{1}}\times \Closure{\rho_{2}})
=\Closure{\rho_{1} }\  \Closure{\rho_{2} }$
we have
\begin{corollary}
\label{cor:closure-of-product}
The topological closure of the product of two sets of matrices 
included in a compact subspace is equal to the product of the topological
closures of the two sets.
 \end{corollary}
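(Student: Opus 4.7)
The plan is to derive this corollary directly from Theorem~\ref{th:fundamental} by specializing to the case $k=2$ and a product relation. Let $\rho_1,\rho_2 \subseteq \mathcal{C}$ be the two sets of matrices, and let $f$ be the binary product $f(M_1,M_2)=M_1 M_2$. Observe that the set product $\rho_1 \rho_2$ is, by definition, the image $f(\rho_1 \times \rho_2)$. Hence by Theorem~\ref{th:fundamental} applied to the binary relation $\rho = \rho_1 \times \rho_2 \subseteq \mathcal{C}^2$, we get
\[
\Closure{\rho_1 \rho_2} \;=\; \Closure{f(\rho_1 \times \rho_2)} \;=\; f\bigl(\Closure{\rho_1 \times \rho_2}\bigr).
\]

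The second step is to invoke the elementary topological identity $\Closure{\rho_1 \times \rho_2} = \Closure{\rho_1} \times \Closure{\rho_2}$, valid in any product of metric spaces (both inclusions are routine: one uses that each projection is continuous, the other that coordinate-wise convergence of sequences implies convergence in the product). Substituting back yields
\[
\Closure{\rho_1 \rho_2} \;=\; f\bigl(\Closure{\rho_1} \times \Closure{\rho_2}\bigr) \;=\; \Closure{\rho_1}\,\Closure{\rho_2},
\]
which is exactly the claimed equality.

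There is essentially no obstacle here: Theorem~\ref{th:fundamental} has already absorbed the only nontrivial content, namely the compactness argument that permits extracting a convergent subsequence from each coordinate. The corollary is a purely formal unfolding of definitions once that theorem is in hand, with the only additional ingredient being the standard fact that closure commutes with finite Cartesian products. I would therefore present the proof as a two-line computation, exactly as sketched above.
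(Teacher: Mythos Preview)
Your proof is correct and follows essentially the same approach as the paper: apply Theorem~\ref{th:fundamental} with $k=2$ and $\rho=\rho_1\times\rho_2$, then invoke the elementary identity $\Closure{\rho_1\times\rho_2}=\Closure{\rho_1}\times\Closure{\rho_2}$ to conclude. The paper presents exactly this two-step computation in the paragraph preceding the corollary.
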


 \subsection{Algebraic and semialgebraic sets}
Let us give first the definition of algebraic 
set over the field of real numbers (cf. \cite{BasuetAl,oni}). 
 \begin{definition}
A subset ${\cal A} \subseteq \R ^ {n}$ is   
{\em algebraic (over the field of real numbers)},  if it satisfies one of the following
equivalent conditions:

\medskip

\noindent  {(i)}
${\cal A}$ is the zero set of a polynomial
$p\in \R[x_{1}, \ldots, x_{n}]$, i.e.,
  \begin{equation} 
  v \in {\cal A} \ \Longleftrightarrow \   p(v)=0.
  \end{equation}

\medskip

\noindent  {(ii)}
${\cal A}$ is the zero set of an arbitrary set of polynomials
 $\cal P$  with coefficients 
in $\R[x_{1}, \ldots, x_{n}]$, i.e.,
for every vector  $v\in  \R ^ {n}$,
  \begin{equation}
  \label{eq:infinite-polynomials}
  v \in {\cal A} \ \Longleftrightarrow \ \forall \ p\in {\cal P}: \ p(v)=0.
  \end{equation}

\end{definition}

The equivalence of the two statements is a consequence
of Hilbert finite basis Theorem. Indeed, the theorem claims that given a 
family  ${\cal P}$ there exists a finite
subfamily $p_{1}, \ldots, p_{r}$ generating the same 
ideal which implies in particular that for all $p\in {\cal P}$ there exist
$q_{1}, \ldots, q_{r}$ with 
$$
 p= q_{1}p_{1} + \cdots + q_{r}p_{r}
$$
Then $p_{j}(v)=0$ for $j=1, \ldots, r$ implies
$p(v)=0$. Now this finite set of equations can be reduced
to the single equation
 $$\displaystyle \sum^n_{i=1} p_{j}(x)^2=0
 $$
As a trivial example, a singleton $\{v\}$ is algebraic since it is the unique
solution of the equation $$\displaystyle \sum^n_{i=1} (x_{i}-v_{i})^2=0$$
where $v_i,$ with $1\leq i\leq n,$ is the i-th component of the vector $v$.

\bigskip It is routine to check that
the family of  algebraic sets  is closed under finite unions
and intersections. However, it is not closed under  complement and projection.
The following more general class of subsets enjoys 
extra closure properties and is therefore more robust.
The equivalence of the two definitions
below is guaranteed by  Tarski-Seidenberg  quantifier elimination result.

 \begin{definition}
A subset ${\cal A} \subseteq \R ^ {n}$ is   
 {\em semialgebraic (over the field of real numbers)} if it satisfies one of the two
 equivalent conditions
 
 \medskip

\noindent {(i)} ${\cal A}$ is the set of vectors satisfying 
a finite Boolean
combination of predicates of the form $p(x_{1}, \ldots, x_{n})>0$ 
where $p\in \R[x_{1}, \ldots, x_{n}]$.

 \medskip

\noindent {(ii)} ${\cal A}$ is first-order definable in the theory of
 the structure whose domain are the reals 
and whose predicates are of the form 
$p(x_{1}, \ldots, x_{n})>0$ and $p(x_{1}, \ldots, x_{n})=0$ 
with $p\in \R[x_{1}, \ldots, x_{n}]$.
 
\end{definition}
 
  We now specify these definitions to square matrices. 

 \begin{definition}
 A set  ${\cal A} \subseteq \R ^ {n\times n}$  of  matrices 
is {\em algebraic}, resp.  {\em semialgebraic}, if considered as a
set of vectors of dimension $n^2$, it is algebraic,  resp.   semialgebraic.
\end{definition}

  We now combine the notions
of zero sets and of topology.
In the following two results  we rephrase Theorem 3.1 of
 \cite{blondel} by emphasizing the main features that serve our purpose (see also  \cite{blondel,oni}).  Given a
subset $E$ of a group, we denote by $\langle E \rangle$ and by $E^*$ 
the subgroup and the submonoid it generates, respectively.
 
\begin{theorem}
\label{th:semigroup-in-compact}
Let $S\subseteq \R^{n\times n}$ be a set of orthogonal  matrices
and let $E$ be any subset  of $S$ satisfying
$\langle S \rangle = \langle E \rangle$.
Then  we have $\Closure{S^*}=\Closure{\langle E \rangle}$. In particular $\Closure{S^*}$ is a group. 
\end{theorem}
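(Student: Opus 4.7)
The plan is to establish the two inclusions $\Closure{S^*} \subseteq \Closure{\langle E \rangle}$ and $\Closure{\langle E \rangle} \subseteq \Closure{S^*}$ separately. The first is immediate: since $S^* \subseteq \langle S \rangle = \langle E \rangle$, monotonicity of the closure operator yields the inclusion.

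For the reverse direction my strategy is to prove that $\Closure{S^*}$ is itself a subgroup of $O_n$ containing $E$; it then automatically contains $\langle E \rangle$ and, being closed, also $\Closure{\langle E \rangle}$. Closure of $\Closure{S^*}$ under multiplication comes essentially for free: orthogonality is preserved by products, so $S^*$ sits inside the compact group $O_n$, and Corollary~\ref{cor:closure-of-product} gives $\Closure{S^*} \cdot \Closure{S^*} = \Closure{S^* \cdot S^*} = \Closure{S^*}$. The only real obstacle is closure under inversion, and I would reduce it to producing, for each single generator $A \in S$, an approximation of $A^{-1}$ by elements of $S^*$: indeed, $(M_1 \cdots M_k)^{-1} = M_k^{-1} \cdots M_1^{-1}$, so once every $M_j^{-1}$ lies in $\Closure{S^*}$ the whole inverse does, by multiplicative closure.

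The key step is therefore a classical compactness trick inside $O_n$. Fix $A \in S$ and consider the sequence of positive powers $A, A^2, A^3, \ldots$; by compactness of $O_n$ it admits a convergent subsequence $A^{k_i} \to B$ with $k_i$ strictly increasing. Orthogonality yields $(A^{k_i})^{-1} = (A^{k_i})^T \to B^T = B^{-1}$, so $A^{k_{i+1}-k_i} = A^{k_{i+1}}(A^{k_i})^{-1}$ tends to $I$. After passing to a sub-subsequence with $p_i := k_{i+1} - k_i \geq 2$, right-multiplication by the fixed matrix $A^{-1}$ (a continuous operation on $\R^{n\times n}$) gives $A^{p_i - 1} \to A^{-1}$. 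Since each $A^{p_i-1}$ is a positive power of $A$, hence lies in $S^*$, this exhibits $A^{-1}$ as an element of $\Closure{S^*}$.

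Putting the two inclusions together yields $\Closure{S^*} = \Closure{\langle E \rangle}$. The ``in particular'' assertion is then automatic: $\langle E \rangle$ is a group and the group operations on $O_n$ (matrix multiplication, and inversion which coincides with the continuous map of transposition) are continuous, so the closure of any subgroup of $O_n$ is again a subgroup; thus $\Closure{\langle E \rangle} = \Closure{S^*}$ is a group.
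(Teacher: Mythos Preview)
Your proof is correct and follows the same overall logic as the paper's, but where the paper simply cites the classical fact that \emph{every compact subsemigroup of a compact group is a subgroup} and applies it to $\Closure{S^*}\subseteq O_n$, you instead unpack that fact by hand via the compactness trick on the powers $A^{k_i}$. So the two arguments are not really different in spirit: your convergent-subsequence manoeuvre is exactly the standard proof of the cited result specialised to the generators. What your version buys is self-containment; what the paper's buys is brevity (its entire proof is three lines).

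One small structural remark: you announce that you will show $\Closure{S^*}$ is a subgroup, but your inversion step only establishes $(S^*)^{-1}\subseteq\Closure{S^*}$, not closure under inversion for arbitrary elements of $\Closure{S^*}$. This does not harm the argument, since $S\cup S^{-1}\subseteq\Closure{S^*}$ together with multiplicative closure already forces $\langle E\rangle=\langle S\rangle\subseteq\Closure{S^*}$, which is all you need for the reverse inclusion; the full group property then drops out at the end from $\Closure{S^*}=\Closure{\langle E\rangle}$ as you observe. (Alternatively, continuity of transposition extends $(S^*)^{-1}\subseteq\Closure{S^*}$ to all of $\Closure{S^*}$ in one line.) The edge case where no sub-subsequence with $p_i\geq 2$ exists forces $A=I$, which is harmless; you could also avoid it entirely by allowing $p_i-1=0$, since $I\in S^*$.
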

\begin{proof} 
 It is known that every compact subsemigroup of a
compact group is a subgroup $G$.
Now $S^*\subseteq \langle E \rangle$
implies $G=\Closure{S^*}\subseteq \Closure{\langle E \rangle}$
and $S\subseteq G$ implies
$\Closure{\langle E \rangle}\subseteq G$
and thus  $\Closure{S^*}= \Closure{\langle E \rangle}$.
 \qed\end{proof}

The main consequence of the next theorem
is that the  topological closure 
of a monoid of orthogonal matrices 
is algebraic

\begin{theorem}
\label{th:topological-closure-of-monoid}
Let $E$ be a set of orthogonal  matrices. 
Then   $\Closure{\langle E \rangle}$
is a subgroup of orthogonal matrices
and it is the zero set of 
all polynomials $p[x_{1,1},\ldots, x_{n,n} ]$
satisfying the conditions
$$p(I)=0 \  \mbox{ and } p(eX)=p(X)  \  \mbox{ for all  } e\in  E$$
Furthermore, if the matrices in $E$ have rational coefficients,
the above condition may be restricted to polynomials with
coefficients in $\Q$.

\end{theorem}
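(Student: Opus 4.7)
The plan is to prove $G = V(\mathcal{I})$, where $G := \Closure{\langle E \rangle}$ and $\mathcal{I}$ denotes the set of polynomials $p \in \R[x_{1,1},\ldots,x_{n,n}]$ satisfying $p(I)=0$ and $p(eX)=p(X)$ identically for all $e\in E$. That $G$ is a subgroup of orthogonal matrices follows immediately from Theorem~\ref{th:semigroup-in-compact}, so all the substance lies in the equality $G = V(\mathcal{I})$.

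For the easy inclusion $G\subseteq V(\mathcal{I})$, I would take $p\in\mathcal{I}$ and substitute $X\mapsto e^{-1}X$ in the identity $p(eX)=p(X)$ to obtain $p(X)=p(e^{-1}X)$; the invariance $p(hX)=p(X)$ therefore extends to every $h\in\langle E\rangle$, and by continuity of $p$ to every $h\in G$. Setting $X=I$ gives $p(g)=p(I)=0$ for all $g\in G$.

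The hard direction is $V(\mathcal{I}) \subseteq G$: given $A\notin G$, one must exhibit some $p\in\mathcal{I}$ with $p(A)\neq 0$, and I would obtain it by an averaging argument. Since $G$ is a compact topological group it carries a normalized Haar measure $dg$, and the orbit $GA$ is compact and disjoint from $G$ (otherwise $A\in G^{-1}G=G$). Combining Urysohn with Stone--Weierstrass on the compact set $G\cup GA$, for any $\epsilon>0$ there is a polynomial $q$ with $|q|\leq\epsilon$ on $G$ and $|q-1|\leq\epsilon$ on $GA$. The averaged expression
$$ \bar q(X) := \int_G q(gX)\,dg $$
is still a polynomial (expand $q(gX)=\sum_\beta h_\beta(g)\,X^\beta$ and integrate each continuous $h_\beta$ over the compact set $G$) and is left-$G$-invariant by right-invariance of $dg$, so in particular $\bar q(eX)=\bar q(X)$ for every $e\in E$. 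Moreover $|\bar q(I)|\leq\epsilon$ while $|\bar q(A)-1|\leq\epsilon$, so $p(X) := \bar q(X)-\bar q(I)$ lies in $\mathcal{I}$ and satisfies $p(A)\neq 0$ for $\epsilon$ small. The main obstacle is precisely this averaging step: it is non-constructive, relying on Haar measure on the a priori unknown group $G$ and on Stone--Weierstrass, but it encodes the key geometric content that $G$-invariant polynomials separate distinct left $G$-orbits of $\R^{n\times n}$.

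For the rationality clause, I would observe that at each fixed degree bound $d$ the conditions $p(I)=0$ and $p(eX)=p(X)$ are homogeneous linear equations on the coefficients of $p$, and those equations have rational coefficients as soon as $E\subseteq\Q^{n\times n}$. Hence the finite-dimensional subspace $\mathcal{I}_{\leq d}$ of $\mathcal{I}$ consisting of polynomials of degree at most $d$ is $\Q$-definable and so admits a $\Q$-basis that is also an $\R$-basis; the zero set of $\mathcal{I}$ is therefore already cut out by its rational elements alone.
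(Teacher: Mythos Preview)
Your proof is correct and complete, but it takes a different route from the paper's. The paper does not prove the hard inclusion $V(\mathcal{I})\subseteq G$ directly; instead it invokes a theorem of Onishchik--Vinberg to the effect that a compact linear group $G$ is already the zero set of all polynomials $p$ with $p(I)=0$ and $p(gX)=p(X)$ for every $g\in G$, and then argues (via the same linearity-and-continuity observation you use in the rationality clause) that the set of $E$-invariant polynomials vanishing at $I$ coincides with the set of $G$-invariant ones. Your Haar-measure averaging combined with Stone--Weierstrass is essentially a self-contained proof of that cited result, and it makes transparent exactly where compactness of $G$ enters. For the rationality clause the paper likewise just refers to Blondel et al., whereas you supply the natural argument that the defining conditions on $\mathcal{I}_{\leq d}$ are homogeneous $\Q$-linear equations when $E\subseteq\Q^{n\times n}$. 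In short: the paper's proof is brief because it outsources both substantive steps to the literature; yours is longer but stands on its own.
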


\begin{proof} It is clear that $\Closure{\langle E \rangle}$ is a 
subgroup of orthogonal matrices, 
say $G$. By \cite[Thm 5, p. 133]{oni}
this group is  the zero set of all polynomials  
$p[x_{1,1},\ldots, x_{n,n} ]$
satisfying the conditions (where $I$ denotes the identity matrix)
\begin{equation}
\label{eq:density}
p(I)=0 \  \mbox{ and } p(gX)=p(X)  \  \mbox{ for all  } g\in  G
\end{equation}
Let us verify that
we may assume the above condition is satisfied by
all $e\in E$. First, if it is the case, it is satisfied for
all elements of the group $\langle E \rangle$. Now observe that
condition (\ref{eq:density}) defines a linear constraint on the 
coefficients of the polynomial. Indeed if $V\in \R^{d}$ 
is the vector of 
coefficients of the polynomial $p$ and if $X$ is viewed as a set of $n^2$ variables, 
identifying the coefficients of the monomials  of the two hand-sides of (\ref{eq:density}) yields   a system of linear equations
$$
M V=V
$$
where the matrix $M$ depends on
$g$ only, say $M=M_{g}$.
Let 
$$
\lim_{i\rightarrow \infty} g_{i}= g
\ \mbox{ and } M_{g_{i}} V=V \ \mbox{ for all  } i\geq 0
$$
Then by continuity we have $M_{g}V=V$. 

The last assertion concerning the case where the coefficients
are rational can be found in \cite[Th. 3.1]{blondel}.
\qed\end{proof}

Combining the previous two 
theorems, we get the general result

\begin{corollary}
\label{cor:L-star-is-algebraic}
Let $L\subseteq \Sigma^*$. Then $\Closure{\varphi(L)^*}$
is algebraic.
\end{corollary}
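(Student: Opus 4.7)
The plan is simply to chain together the two preceding theorems. Since $\varphi$ is, by definition of a quantum automaton, a representation of $\Sigma^*$ into the group $O_n$ of orthogonal matrices, the set $S = \varphi(L)$ consists of orthogonal matrices and in particular lies in a compact subset of $\R^{n \times n}$ (the orthogonal group is compact). This is the only ``hypothesis check'' required to invoke the machinery of the preceding results.

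First I would apply Theorem \ref{th:semigroup-in-compact} with $S = E = \varphi(L)$. Trivially $\langle S\rangle = \langle E \rangle$, and the theorem gives
\[
\Closure{\varphi(L)^*} \;=\; \Closure{S^*} \;=\; \Closure{\langle E \rangle} \;=\; \Closure{\langle \varphi(L)\rangle}.
\]
In particular, $\Closure{\varphi(L)^*}$ is already known to be a subgroup of $O_n$, not merely a submonoid.

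Then I would invoke Theorem \ref{th:topological-closure-of-monoid} applied to $E = \varphi(L)$: it identifies $\Closure{\langle \varphi(L)\rangle}$ as the zero set of the family of all polynomials $p \in \R[x_{1,1},\ldots,x_{n,n}]$ satisfying $p(I)=0$ and $p(eX)=p(X)$ for every $e \in \varphi(L)$. By definition (condition (ii) of algebraicity, using Hilbert's basis theorem to reduce to a single polynomial as noted in the excerpt), this exhibits $\Closure{\varphi(L)^*}$ as an algebraic subset of $\R^{n\times n}$, finishing the proof.

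There is essentially no obstacle here: the corollary is a direct composition of the two theorems, and the only thing one must be careful about is to explicitly note that $\varphi(L) \subseteq O_n$ so that both theorems apply. The substantive mathematical content (compactness of the orthogonal group, the fact that the topological closure of a subsemigroup of a compact group is a subgroup, and the algebraicity of closed subgroups of $O_n$) is entirely absorbed into the two invoked theorems.
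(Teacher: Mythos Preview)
Your proposal is correct and matches the paper's approach exactly: the paper simply states that the corollary follows by ``combining the previous two theorems'' without giving further detail, and your chaining of Theorem~\ref{th:semigroup-in-compact} (with $S=E=\varphi(L)$) followed by Theorem~\ref{th:topological-closure-of-monoid} is precisely that combination spelled out.
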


\subsection{Effectiveness issues}

 We now return to the $(L, Q)$ INTERSECTION problem
as defined in the Introduction. We want to prove the 
implication
$$
\forall X: X \in \varphi(L) \Rightarrow ||sXP||\leq \lambda
$$
We observed that due to the fact that the function
$X\rightarrow ||sXP||$ is continuous the implication is
equivalent to the implication
$$
\forall X: X \in \Closure{\varphi(L)} \Rightarrow ||sXP||\leq \lambda
$$
It just happens that under certain hypotheses, 
$\Closure{\varphi(L)}$ is semialgebraic, i.e.,
it is defined by a first-order formula 
which turns the above statement into a first order
formula. In the simplest examples, the closure is defined by an infinite 
conjunction of equations which by  Hilbert  finite basis result 
reduces to a unique equation. Thus
Theorem \ref{th:topological-closure-of-monoid} guarantees the existence
of the formula but does not give an upper bound on
the finite number of equations which must be tested.
Therefore the following definition is instrumental 
for the rest of the paper.
It conveys the idea that given a subset $\cal A$ of matrices  there exists a sequence of formulas 
defining a non-increasing sequence of matrices which eventually coincide with $\cal A$. 
Each formula of the sequence can thus be considered as an approximation of the ultimate formula.

 \begin{definition}
\label{de:effectively-eventually-definable}
A subset ${\cal A}$ of  matrices 
is \emph{effectively eventually definable} if there exists
a constructible sequence of first-order formulas $\phi_{i}$
satisfying the conditions

\medskip

\noindent 1)   for all $i\geq 0$ $ \phi_{i+1}\Rightarrow \phi_{i}$

\medskip

\noindent  2)  for all $i\geq 0$ ${\cal A} \models \phi_{i}$

\medskip

\noindent  3) there exists  $n\geq 0 $  ${\cal B} \models \phi_{n} \Rightarrow {\cal B} \subseteq {\cal A}$ 
%
%
%
%
%
\end{definition}
   The following is a first application of the notion
and illustrates the discussion before the definition. 

 \begin{proposition}
\label{pr:meta-resuositionlt}
Let ${\cal Q}$   be a rational quantum automaton. Let $L\subseteq \Sigma^*$ 
be such that   the set $\Closure{\varphi(L)}$ is effectively
eventually definable. It is recursively decidable 
whether or not  
$L\cap |{\cal Q}_{>}|=\emptyset$ holds.
\end{proposition}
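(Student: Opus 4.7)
The plan is to run two semi-decision procedures in parallel, one for each side of the dichotomy. As already noted in the introduction, the $(L,\mathcal Q)$ non-emptiness is semi-decidable essentially by brute force: since $\mathcal Q$ is rational, given any word $w\in\Sigma^*$ one can effectively compute the rational number $||s\varphi(w)P||^2$ and compare it with $\lambda^2$; enumerating the words of $L$ (using that $L$ is recursively enumerable, which is implicit in the ``family of languages'' setting and holds in every instance that the paper cares about) eventually produces a witness whenever one exists.

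The emptiness side is where the hypothesis on $L$ enters. By continuity of $X\mapsto ||sXP||$, the condition $L\cap|\mathcal Q_{>}|=\emptyset$ is equivalent to
\[
\forall X\in\Closure{\varphi(L)}:\ ||sXP||^{2}\le\lambda^{2}.
\]
Here $||sXP||^{2}$ is a polynomial in the $n^{2}$ entries of $X$ with rational coefficients (because $s,P,\lambda$ are rational), so the predicate $\pi(X):\;||sXP||^{2}\le\lambda^{2}$ is a first-order formula over the reals with rational coefficients. The plan is now, given the constructible sequence $(\phi_{i})_{i\ge 0}$ from Definition~\ref{de:effectively-eventually-definable}, to enumerate $i=0,1,2,\ldots$ and, at step $i$, to invoke Tarski--Seidenberg quantifier elimination on the sentence
\[
\Phi_{i}:\ \forall X:\ \phi_{i}(X)\Rightarrow \pi(X).
\]
This sentence is decidable, and we output ``empty'' as soon as some $\Phi_{i}$ is valid.

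Correctness in the empty case is exactly what condition~3 of Definition~\ref{de:effectively-eventually-definable} is designed for. If the intersection is empty then $\Closure{\varphi(L)}\subseteq\{X:\pi(X)\}$. By condition~3 there exists $n$ with $\{X:\phi_{n}(X)\}\subseteq \Closure{\varphi(L)}$, hence $\{X:\phi_{n}(X)\}\subseteq\{X:\pi(X)\}$, so $\Phi_{n}$ holds and the procedure halts. Conversely, if some $\Phi_{i}$ is valid then by condition~2 every $X\in\Closure{\varphi(L)}$ satisfies $\phi_{i}$ and hence $\pi$, so emptiness really holds. Running the two procedures in parallel therefore always terminates and returns the correct answer.

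I expect the only subtle point to be in justifying termination of the emptiness procedure: the semialgebraicity of $\Closure{\varphi(L)}$ alone would suffice to state the problem as a single first-order sentence, but without an effective way of naming that sentence one could never start the Tarski--Seidenberg step. Condition~3, combined with the non-increasing chain guaranteed by condition~1, is what converts ``$\Closure{\varphi(L)}$ is semialgebraic'' into an actual algorithm, since it forces the syntactically available approximations $\phi_{i}$ to become tight in finitely many steps, without our needing to know which step that is.
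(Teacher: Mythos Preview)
Your proof is correct and follows essentially the same approach as the paper: run in parallel a brute-force search for a witness to non-emptiness and a Tarski--Seidenberg test of the sentences $\Phi_i$ built from the approximating formulas $\phi_i$. If anything, your write-up is slightly more careful than the paper's in two places: you explicitly square to make $\pi(X)$ a polynomial inequality with rational coefficients, and you spell out how condition~3 of Definition~\ref{de:effectively-eventually-definable} guarantees termination in the empty case (the paper merely asserts that the second semialgorithm ``will eventually answer `yes'\,'').
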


\begin{proof}
Equivalently we prove the inclusion $L\subseteq |{\cal Q}_{\leq}|$. 
In order to prove that the 
inclusion is effective, we proceed as in \cite{blondel}.
We run in parallel two semialgorithms.
The first one verifies the noninclusion by enumerating the words 
$w\in L$ and testing if 
$||s \varphi(w) P||>\lambda$ holds.
The second semialgorithm considers
a sequence of formulas $\phi_{i}(X)$, $i=0,\ldots,$
which effectively eventually defines 
$\Closure{\varphi(L)}$ and verifies whether
the sentence
$$
\Psi_{i}\equiv \forall X: \phi_{i}(X) \Rightarrow
sXP\leq \lambda
$$
holds which can be achieved by Tarski Seidenberg elimination 
result.
If the inclusion $L\subseteq |{\cal Q}_{\leq}|$ 
holds then the first
semialgorithm cannot answer ``yes'' and the 
second semialgorithm will eventually
answer ``yes''. If the inclusion does not
hold then the second semialgorithm 
cannot answer ``yes'' for any $\Psi_{i}$
 since the second condition
of the Definition \ref{de:effectively-eventually-definable} implies
$X\in \Closure{\varphi(L)} \Rightarrow \phi_{i}(X) $
and thus
$$\forall X: X\in \Closure{\varphi(L)} \Rightarrow  ||sXP|| \leq \lambda $$
a contradiction.
\qed\end{proof}

We state a sufficient condition for a subset of matrices
to be effectively
eventually definable.

Let $S\subseteq \R^{n\times n}$ be a set of orthogonal  matrices
and let $E$ be any subset satisfying
$\langle S \rangle = \langle E \rangle$.

\begin{proposition}
\label{pr:finitely-generated}
Let $L\subseteq \Sigma^*$ and let $E\subseteq \Q^{n\times n}$ 
be a finite subset of orthogonal matrices 
 satisfying $\langle\varphi(L)\rangle = \langle E\rangle$. 
Then $\Closure{\varphi(L)^*}$ is effectively
eventually definable.
 \end{proposition}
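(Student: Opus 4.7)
The plan is to combine the two preceding theorems with Hilbert's basis theorem and a degree-by-degree approximation. By Theorem~\ref{th:semigroup-in-compact}, the hypothesis $\langle \varphi(L) \rangle = \langle E \rangle$ gives $\Closure{\varphi(L)^*} = \Closure{\langle E \rangle}$. By Theorem~\ref{th:topological-closure-of-monoid}, this closure is the common zero set of the (infinite) family
$$
\mathcal{P} = \{\, p \in \Q[x_{1,1},\dots,x_{n,n}] : p(I)=0 \text{ and } p(eX)=p(X) \text{ for all } e \in E\,\}.
$$
The rationality assumption is what allows the polynomials to be taken over $\Q$.

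For each $d \geq 0$, let $\mathcal{P}_d$ denote the subspace of $\mathcal{P}$ consisting of polynomials of total degree at most $d$. Because $E$ is \emph{finite} and the defining conditions $p(I)=0$, $p(eX)=p(X)$ are \emph{linear} constraints on the (finitely many) coefficients of a polynomial of bounded degree, the proof of Theorem~\ref{th:topological-closure-of-monoid} shows that $\mathcal{P}_d$ is the solution set of a finite $\Q$-linear system which can be written down and solved effectively. Let $p_{d,1},\ldots,p_{d,k_d}$ be a computed basis of $\mathcal{P}_d$ and put
$$
\phi_d(X) \;\equiv\; \bigwedge_{j=1}^{k_d} p_{d,j}(X) = 0.
$$
The sequence $(\phi_d)_{d\geq 0}$ is clearly constructible.

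It remains to verify the three conditions of Definition~\ref{de:effectively-eventually-definable}. Monotonicity (1) is immediate since $\mathcal{P}_d \subseteq \mathcal{P}_{d+1}$, so every basis element of $\mathcal{P}_d$ is a $\Q$-linear combination of basis elements of $\mathcal{P}_{d+1}$, and $\phi_{d+1}\Rightarrow \phi_d$. Condition (2) is equally direct: every matrix in $\Closure{\varphi(L)^*}$ annihilates all of $\mathcal{P}$ and in particular all of $\mathcal{P}_d$. The main point is condition (3), and this is exactly where Hilbert's basis theorem enters. The ideal $I$ generated by $\mathcal{P}$ in $\Q[x_{1,1},\ldots,x_{n,n}]$ is finitely generated, and by clearing denominators one may choose its generators inside $\mathcal{P}$ itself; let $N$ be the maximum degree of such a finite set of generators. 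Then $\mathcal{P}_N$ generates the same ideal $I$ as $\mathcal{P}$, so the two families have the same zero set, namely $\Closure{\varphi(L)^*}$. Consequently any matrix $X$ with $\phi_N(X)$ true lies in $\Closure{\varphi(L)^*}$, establishing (3).

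The one subtlety to handle carefully is the argument that generators of the ideal $(\mathcal{P})$ can be chosen inside $\mathcal{P}$; this is what justifies interpreting $\phi_N$ as an approximation from within the family, and together with the continuity of the degree filtration is the only nontrivial ingredient. Everything else is effective linear algebra over $\Q$.
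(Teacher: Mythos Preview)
Your proof is correct and follows essentially the same route as the paper's: reduce to $\Closure{\langle E\rangle}$ via Theorem~\ref{th:semigroup-in-compact}, describe it as the zero set of the invariant family $\mathcal{P}$ via Theorem~\ref{th:topological-closure-of-monoid}, build an increasing exhaustion of $\mathcal{P}$ by effectively computable finite subsets, and invoke Hilbert's basis theorem for stabilisation. The only difference is cosmetic: the paper enumerates all rational polynomials and filters those passing the invariance test (which is decidable because $E$ is finite and rational), whereas you organise the same computation by degree and solve the linear system for $\mathcal{P}_d$ directly; the subtlety you flag about choosing ideal generators inside $\mathcal{P}$ is handled by the standard observation that any finite generating set of $(\mathcal{P})$ can be rewritten as combinations of finitely many elements of $\mathcal{P}$, and the paper is equally terse on this point.
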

 
 \begin{proof}
Indeed, set ${\cal A}=\Closure{\varphi(L)^*}=\Closure{\langle E\rangle}$
where the last equality is guaranteed by Theorem \ref{th:semigroup-in-compact}.
Then  ${\cal A}$  is the zero set  
of all polynomials $p(X)$ where 
$p$ satisfies the condition
$$
p(I)=0 \mbox{ and } p(gX)=p(X)  \mbox{ for all } g\in {\cal A}
$$
Since it clearly 
suffices to verify the invariance of $p$ under the action
of the finite set of generators, we proceed as follows.
We enumerate all polynomials $p\in \Q[x_{1,1}, \ldots, x_{n,n}]$
say $p_{0}, p_{1}, \ldots $.
For each such polynomial $p$ the invariance 
relative to the action of each generator can be tested.
Thus the formula
$$
\phi_{i}(X)\equiv \bigwedge^{i}_{j=1}  p_{j}(X)=0
$$
effectively eventually defines ${\cal A}$: the first
two conditions can be readily verified and the last one is 
a consequence of Hilbert finite basis theorem on ideals
of polynomials.
\qed\end{proof}


\subsection{Closure properties}
In this paragraph we investigate some closure properties of
the three different classes of matrices:
algebraic, semialgebraic and effectively eventually definable,
under the main usual operations as well as new operations.

  We define the  \emph{sandwich}  operation denoted by  $\diamond$
whose first operand is a set of pairs of matrices
 ${\cal A}\subseteq \R^{n\times n} \times \R^{n\times n} $ 
and the second operand a set of matrices ${\cal B}\subseteq \R^{n\times n} $
by setting  
$$
{\cal A} \diamond {\cal B}= \{XYZ\mid (X,Z)\in {\cal A} \mbox{ and } Y\in {\cal B}\}
$$
The next operation will be used. Given a  bijection 
\begin{equation}
\label{eq:entry-renaming}
\pi: \{(i,j) \mid  i, j\in \{1, \ldots, n\}\}\rightarrow \{(i,j) \mid  i, j\in \{1, \ldots, n\}\}
\end{equation}
and a matrix $M \in   \R ^ {n\times n}$ denote by $\pi(M)$
the matrix $\pi(M)_{i,j}=M_{\pi(i, j)}$. Extend this operation to subsets 
of matrices ${\cal A}$. Write $\pi({\cal A})$ to denote the set
of matrices $\pi(M)$ for all $M \in   {\cal A}$.

 The last operation is the
\emph{sum} of square  matrices  $M_{1},  \ldots ,  M_{k}$
whose result  is the square block matrix 
\begin{equation}
\label{eq:sum-of-matrices}
M_{1}\oplus \cdots \oplus M_{k}=              \left (\begin{array}{cccc} 
            M_{1} &   0 &  \cdots &   0\\ 
            0 &   M_{2}   & \cdots &  0\\
            \vdots     & \vdots & \vdots & \vdots \\ 
             0 &   0  &   0  & M_{k}
            \end{array}
\right )
\end{equation}
These notations extend to subsets of matrices in the 
natural way.
Here we assume that all $k$ matrices have the same dimension $n\times n$. 
Observe that if the matrices are orthogonal,
so is their sum. Such matrices form a subgroup of
orthogonal matrices of dimension $kn\times kn$.

 Logic provides an elegant way to formulate properties
in the present context.  
  Some conventions are used throughout this work. E.g.,
we write $\exists^n X$ when we mean that $X$
is a vector of $n$ bound variables. Furthermore, 
a vector of $n\times n$ variables can be interpreted 
as an $n\times n$ matrix of variables.
 As a consequence of  Tarski-Seidenberg  result,
consider two semialgebraic subsets of matrices, 
say ${\cal A}_{1}$  and ${\cal A}_{2}$, defined 
by two first-order formulas $\phi_{1}(X_1)$  and $\phi_{2}(X_2)$
where $X_1$ and $X_2$ are two families 
of $n^2$ free variables viewed as two
$n\times n$ matrices of variables. Then 
the product 
$$
{\cal A}_{1} {\cal A}_{2}=\{M_{1}M_{2} \mid  M_{1}\in {\cal A}_{1}, M_{2}\in  {\cal A}_{2}\}
$$
is defined by the following formula where $X$ is a family
of $n^2$ free variables viewed as  an
$n\times n$ matrix
$$
\exists^{n\times n} X_{1}\exists^{n\times n} X_{2}: X=X_{1} X_{2}\wedge \phi_{1}(X_{1}) \wedge \phi_{2}(X_{2})
$$
where $X=X_{1} X_{2}$ is an abbreviation for 
the predicate defining $X$ as the matrix product 
of $X_{1}$ and $ X_{2}$. This proves that the product 
of two semialgebraic sets of  matrices is semialgebraic. 
Similarly we have the following closure properties whose
verification is routine.

\begin{proposition}
\label{pr:union-prod-phi} 
Let $ {\cal A}_{1},   {\cal A}_{2}\subseteq   \R ^ {n\times n}$ be two 
 sets of matrices and 
let $\pi$ be a one-to-one mapping as in (\ref{eq:entry-renaming}). 

\medskip

\noindent 1) If $ {\cal A}_{1}$ and $ {\cal A}_{2}$  are algebraic so are 
$ {\cal A}_{1}  \cup {\cal A}_{2}$ and $\pi( {\cal A}_{1})$.

\medskip

\noindent 2) If $ {\cal A}_{1}$ and $ {\cal A}_{2}$  are semialgebraic,
resp. effectively eventually definable,  so are 
$ {\cal A}_{1} \cup  {\cal A}_{2}$,  $ {\cal A}_{1}  {\cal A}_{2}$ and $\pi( {\cal A}_{1})$.

\end{proposition}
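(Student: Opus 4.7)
The plan is to treat the three operations (union, product, index renaming) by giving an explicit construction of the defining object in each class. For the two classical classes this is routine; the interesting part is preserving effective eventual definability under product.

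For the algebraic case (part 1) the union is handled by multiplying defining polynomials: if $\mathcal{A}_j$ is the zero set of $p_j$, then $\mathcal{A}_1\cup\mathcal{A}_2$ is the zero set of $p_1p_2$. The index renaming $\pi$ is handled by substituting $x_{\pi(i,j)}$ for $x_{i,j}$ inside a defining polynomial, which preserves the property of being a zero set. Note that product of two algebraic sets need not be algebraic, which is why that clause appears only in part 2.

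For the semialgebraic case (part 2) all three operations are immediate consequences of Tarski--Seidenberg: if $\mathcal{A}_1,\mathcal{A}_2$ are defined by formulas $\phi_1(X), \phi_2(X)$, then $\mathcal{A}_1\cup\mathcal{A}_2$ is defined by $\phi_1(X)\vee\phi_2(X)$, the product $\mathcal{A}_1\mathcal{A}_2$ is defined by the formula
$$
\exists^{n\times n}X_1\exists^{n\times n}X_2:\ X=X_1X_2\wedge\phi_1(X_1)\wedge\phi_2(X_2),
$$
and $\pi(\mathcal{A}_1)$ is defined by the formula obtained by renaming free variables according to $\pi$. Tarski--Seidenberg guarantees that after existential quantifier elimination the resulting set is again semialgebraic.

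The main work lies in the effectively eventually definable case. Suppose $(\phi_i)$ effectively eventually defines $\mathcal{A}_1$ and $(\psi_i)$ effectively eventually defines $\mathcal{A}_2$. The natural candidates are the sequences
$$
\alpha_i(X)\equiv \phi_i(X)\vee\psi_i(X),\quad
\beta_i(X)\equiv \exists^{n\times n}X_1\exists^{n\times n}X_2:\ X=X_1X_2\wedge\phi_i(X_1)\wedge\psi_i(X_2),
$$
and, for the renaming, the sequence obtained by substituting $X_{\pi(i,j)}$ for $X_{i,j}$ in $\phi_i$. Each is clearly constructible, non-increasing in $i$ (condition~1), and satisfied by the target set (condition~2); the only delicate point is the stabilization clause~3. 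For $\alpha_i$, pick $n=\max(n_1,n_2)$ where $n_1,n_2$ witness condition~3 for $\phi$ and $\psi$; if $\mathcal{B}\models\alpha_n$, split $\mathcal{B}=\mathcal{B}_1\cup\mathcal{B}_2$ according to which disjunct holds, apply the two witnesses, and conclude $\mathcal{B}\subseteq\mathcal{A}_1\cup\mathcal{A}_2$. For $\beta_i$, take again $n=\max(n_1,n_2)$; if $\mathcal{B}\models\beta_n$, then every $X\in\mathcal{B}$ factors as $X_1X_2$ with $\phi_n(X_1)$ and $\psi_n(X_2)$, so the sets $\mathcal{C}_k=\{Y:\phi_n(Y)\}$ and $\mathcal{D}_k=\{Y:\psi_n(Y)\}$ model $\phi_n$ and $\psi_n$ respectively and hence lie in $\mathcal{A}_1,\mathcal{A}_2$ by the stabilization hypothesis, giving $X\in\mathcal{A}_1\mathcal{A}_2$. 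The renaming case is trivial by applying the same $\pi$-substitution to every formula in the sequence. The main obstacle is precisely this third clause for the product: one must argue at the level of \emph{sets} satisfying $\beta_n$, not individual matrices, which is why passing through the maximal models $\mathcal{C}_k,\mathcal{D}_k$ of $\phi_n,\psi_n$ is necessary.
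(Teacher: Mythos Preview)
Your proof is correct and follows exactly the approach the paper intends: the paper states these closure properties as ``routine'' and gives no proof beyond exhibiting the product formula $\exists X_1\exists X_2:\,X=X_1X_2\wedge\phi_1(X_1)\wedge\phi_2(X_2)$ in the semialgebraic case, which you reproduce. Your only overstatement is calling condition~3 for the product an ``obstacle'': since condition~3 applied to the singleton $\{X_1\}$ (or equivalently to the full model $\{Y:\phi_n(Y)\}$, as you do) immediately gives $X_1\in\mathcal{A}_1$, the argument is genuinely element-wise and no subtlety arises.
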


\begin{proposition}
\label{pr:sandwich}  Let $ {\cal A}_{1}\subseteq  \R ^ {n\times n}\times \R ^ {n\times n}$   
and $   {\cal A}_{2}\subseteq   \R ^ {n\times n}$ be semialgebraic,
resp. effectively eventually definable.
Then $ {\cal A}_{1} \diamond   {\cal A}_{2}$ is semialgebraic,
resp. effectively eventually definable.
\end{proposition}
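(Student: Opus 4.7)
The plan is essentially to mirror the treatment of the ordinary product ${\cal A}_{1}{\cal A}_{2}$ sketched just before Proposition \ref{pr:union-prod-phi}, but now with three matrix variables bound by the sandwich definition instead of two.

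For the semialgebraic case, I would let $\phi_{1}(X, Z)$ be a first-order formula defining ${\cal A}_{1}\subseteq \R^{n\times n}\times \R^{n\times n}$ (so $X, Z$ are two $n\times n$ blocks of free variables) and $\phi_{2}(Y)$ a formula defining ${\cal A}_{2}$. Then writing $W=XYZ$ for the obvious polynomial predicate expressing the matrix product, the formula
$$
\psi(W) \equiv \exists^{n\times n}X\, \exists^{n\times n}Y\, \exists^{n\times n}Z :\ W=XYZ \ \wedge\ \phi_{1}(X,Z)\ \wedge\ \phi_{2}(Y)
$$
clearly defines ${\cal A}_{1}\diamond {\cal A}_{2}$. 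Tarski--Seidenberg quantifier elimination then yields an equivalent quantifier-free formula, so ${\cal A}_{1}\diamond {\cal A}_{2}$ is semialgebraic.

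For the effectively eventually definable case, suppose $({\phi^{(1)}_{i}})_{i\geq 0}$ and $({\phi^{(2)}_{i}})_{i\geq 0}$ are constructible sequences witnessing that ${\cal A}_{1}$ and ${\cal A}_{2}$ are effectively eventually definable. I would define
$$
\psi_{i}(W) \equiv \exists^{n\times n}X\, \exists^{n\times n}Y\, \exists^{n\times n}Z :\ W=XYZ\ \wedge\ \phi^{(1)}_{i}(X,Z)\ \wedge\ \phi^{(2)}_{i}(Y),
$$
which is constructible from the two given sequences. Condition~(1) of Definition \ref{de:effectively-eventually-definable} follows because $\phi^{(1)}_{i+1}\Rightarrow \phi^{(1)}_{i}$ and $\phi^{(2)}_{i+1}\Rightarrow \phi^{(2)}_{i}$ imply $\psi_{i+1}\Rightarrow \psi_{i}$. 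Condition~(2) is immediate: any $W=XYZ\in {\cal A}_{1}\diamond {\cal A}_{2}$ comes with a witness pair $(X,Z)\in {\cal A}_{1}$ and $Y\in {\cal A}_{2}$, and these satisfy $\phi^{(1)}_{i}$ and $\phi^{(2)}_{i}$ respectively for every $i$. For condition~(3), let $n_{1}, n_{2}$ be the indices guaranteed by the definition for ${\cal A}_{1}$ and ${\cal A}_{2}$ and set $n=\max(n_{1},n_{2})$; then any $W$ satisfying $\psi_{n}$ comes with witnesses forcing $(X,Z)\in {\cal A}_{1}$ and $Y\in {\cal A}_{2}$, hence $W\in {\cal A}_{1}\diamond {\cal A}_{2}$.

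I do not see a serious obstacle here: the argument is a straightforward syntactic composition, completely analogous to the closure under ordinary product. The only mildly delicate point is keeping the three conditions of Definition \ref{de:effectively-eventually-definable} correctly lined up across the two input sequences, which is handled by taking the maximum of the two threshold indices for condition~(3). No new topological or algebraic input is needed beyond the $(X, Z)$ pairing already built into the first operand.
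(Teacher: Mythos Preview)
Your proposal is correct and is exactly the routine verification the paper has in mind: the paper does not write out a proof of Proposition~\ref{pr:sandwich}, treating it (along with Propositions~\ref{pr:union-prod-phi}, \ref{pr:product-of-blocks}, \ref{pr:blocks-of-product}) as a straightforward extension of the product argument sketched just before Proposition~\ref{pr:union-prod-phi}. Your explicit handling of the three conditions of Definition~\ref{de:effectively-eventually-definable}, in particular taking $n=\max(n_{1},n_{2})$ for condition~(3), is precisely what is needed.
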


 \begin{proposition}
\label{pr:product-of-blocks} 
Let $ {\cal A}$ be a semialgebraic,
resp. effectively eventually definable, set
of $kn\times kn$  matrices of the form (\ref{eq:sum-of-matrices}).
The set 
$$
\{X_{1}\cdots X_{k}\mid X_{1}\oplus \cdots \oplus X_{k} \in {\cal A}\}
$$
is semialgebraic,
resp. effectively eventually definable.
\end{proposition}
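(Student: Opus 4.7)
The plan is to express the set
$$\mathcal{T} \;=\; \{X_{1}\cdots X_{k}\mid X_{1}\oplus\cdots\oplus X_{k} \in \mathcal{A}\}$$
as the projection onto the last $n^{2}$ coordinates of a first-order definable set built from $\mathcal{A}$, and then to lift this construction to the eventually-definable setting. Concretely, given a formula $\phi(Y)$ with $Y$ a matrix of $(kn)^{2}$ variables, introduce $n\times n$ matrices of variables $X_{1},\ldots,X_{k}$ and $Z$, and consider
$$
\psi(Z)\;\equiv\;\exists^{n\times n}X_{1}\cdots\exists^{n\times n}X_{k}\;:\;
Z = X_{1}\cdots X_{k}\;\wedge\;\phi\bigl(X_{1}\oplus\cdots\oplus X_{k}\bigr),
$$
where the predicates ``$Z = X_{1}\cdots X_{k}$'' and ``$Y = X_{1}\oplus\cdots\oplus X_{k}$'' are polynomial identities (entrywise) between the variables involved, hence first-order expressible in the theory of the reals.

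For the semialgebraic case, if $\mathcal{A}$ is defined by a semialgebraic formula $\phi$, then the displayed $\psi(Z)$ is a first-order formula in the language of ordered rings, so by Tarski--Seidenberg quantifier elimination the set $\mathcal{T}$ it defines is semialgebraic. This handles the first alternative directly.

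For the effectively eventually definable case, let $(\phi_{i})_{i\geq 0}$ be a constructible sequence witnessing that $\mathcal{A}$ is effectively eventually definable, and let $\psi_{i}(Z)$ be obtained from $\phi_{i}$ by the recipe above. The sequence $(\psi_{i})$ is constructible. Condition~1 transfers: $\phi_{i+1}\Rightarrow\phi_{i}$ implies $\psi_{i+1}\Rightarrow\psi_{i}$ since the only change in $\psi$ is the replacement of $\phi_{i}$ by a stronger formula under the same quantifiers. Condition~2 transfers: if $Z = X_{1}\cdots X_{k}$ with $X_{1}\oplus\cdots\oplus X_{k}\in\mathcal{A}$, then $\phi_{i}$ holds on $X_{1}\oplus\cdots\oplus X_{k}$ for every $i$, so $\psi_{i}(Z)$ is witnessed by this very tuple. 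Condition~3 transfers: fixing the index $n$ at which $\phi_{n}$ exactly captures $\mathcal{A}$, any $Z$ satisfying $\psi_{n}$ admits witnesses $X_{1},\ldots,X_{k}$ with $\phi_{n}(X_{1}\oplus\cdots\oplus X_{k})$, whence $X_{1}\oplus\cdots\oplus X_{k}\in\mathcal{A}$ and consequently $Z\in\mathcal{T}$.

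The only potentially delicate point is bookkeeping: one must verify that ``$Y = X_{1}\oplus\cdots\oplus X_{k}$'' can indeed be substituted inside $\phi_{i}$ without leaving the class of constructible first-order formulas, and that the existential projection preserves the three closure conditions. Both are straightforward since the substitution is purely syntactic and existential quantification over polynomial conditions is exactly the operation preserved by the three defining clauses. No additional argument beyond this formal manipulation is needed, so I do not anticipate a genuine obstacle; the proof is essentially the observation that the operations ``block-diagonal embedding'' and ``$k$-fold product'' are both definable by polynomial identities, and hence compatible with the three classes considered.
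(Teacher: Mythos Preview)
Your proof is correct and follows exactly the approach the paper intends: the paper states this proposition (together with the neighboring closure properties) as having a ``routine'' verification, after illustrating the template on the product $\mathcal{A}_{1}\mathcal{A}_{2}$ by writing down the first-order formula $\exists X_{1}\exists X_{2}: X=X_{1}X_{2}\wedge\phi_{1}(X_{1})\wedge\phi_{2}(X_{2})$ and invoking Tarski--Seidenberg. Your $\psi(Z)$ is precisely the analogous formula for the block-product operation, and your transfer of the three clauses in Definition~\ref{de:effectively-eventually-definable} is the expected bookkeeping the paper omits.
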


 \begin{proposition}
\label{pr:blocks-of-product} 
If $ {\cal A}_{1},   \ldots, {\cal A}_{k}\subseteq   \R ^ {n\times n}$ are  
semialgebraic,
resp. effectively eventually definable sets of matrices then so is the set
$ {\cal A}_{1}\oplus   \cdots \oplus{\cal A}_{k}$.

\end{proposition}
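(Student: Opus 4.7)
The plan is direct: simply express the block-diagonal structure by a first-order formula, combining the defining formulas for each $\mathcal{A}_i$ with equations forcing the off-diagonal blocks to vanish. Write a matrix $Y \in \R^{(kn)\times (kn)}$ as a $k \times k$ grid of $n \times n$ blocks $Y^{(i,j)}$, each entry of which is one of the $(kn)^2$ free variables. Then define
$$
\Theta(Y) \;\equiv\; \Bigl(\bigwedge_{i \neq j} \bigwedge_{1 \le a,b \le n} Y^{(i,j)}_{a,b}=0\Bigr) \;\wedge\; \bigwedge_{i=1}^{k} \phi_i(Y^{(i,i)}),
$$
where $\phi_i(X_i)$ is a defining first-order formula for $\mathcal{A}_i$ in the semialgebraic case. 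The first conjunct is a conjunction of polynomial equations, and each $\phi_i$ is a first-order formula in the entries of the corresponding diagonal block, so $\Theta(Y)$ is first-order in the entries of $Y$. By construction the set of $Y$ satisfying $\Theta$ is precisely $\mathcal{A}_1 \oplus \cdots \oplus \mathcal{A}_k$, so the latter is semialgebraic.

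For the effectively eventually definable case, suppose each $\mathcal{A}_i$ is equipped with a constructible sequence $\phi_i^{(m)}$, $m \ge 0$, satisfying the three conditions of Definition \ref{de:effectively-eventually-definable}. Define the sequence
$$
\Theta^{(m)}(Y) \;\equiv\; \Bigl(\bigwedge_{i \neq j} \bigwedge_{1 \le a,b \le n} Y^{(i,j)}_{a,b}=0\Bigr) \;\wedge\; \bigwedge_{i=1}^{k} \phi_i^{(m)}(Y^{(i,i)}).
$$
This sequence is constructible because each $\phi_i^{(m)}$ is. Condition (1) of the definition is inherited: $\phi_i^{(m+1)} \Rightarrow \phi_i^{(m)}$ for every $i$ yields $\Theta^{(m+1)} \Rightarrow \Theta^{(m)}$. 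Condition (2) is inherited likewise: an element of $\mathcal{A}_1 \oplus \cdots \oplus \mathcal{A}_k$ has zero off-diagonal blocks and its $i$-th diagonal block lies in $\mathcal{A}_i$, hence satisfies $\phi_i^{(m)}$ by assumption.

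For condition (3), for each $i$ pick $m_i$ such that any set of matrices satisfying $\phi_i^{(m_i)}$ is contained in $\mathcal{A}_i$, and set $N = \max_i m_i$. If $\mathcal{B}$ is any set of $(kn)\times(kn)$ matrices all of whose elements satisfy $\Theta^{(N)}$, then every $Y \in \mathcal{B}$ is block-diagonal and its $i$-th diagonal block satisfies $\phi_i^{(N)} \Rightarrow \phi_i^{(m_i)}$, hence lies in $\mathcal{A}_i$. Thus $\mathcal{B} \subseteq \mathcal{A}_1 \oplus \cdots \oplus \mathcal{A}_k$, establishing condition (3). No step presents a real obstacle; the proposition is essentially a bookkeeping consequence of the preservation of first-order definability under Boolean combination together with the fact that the eventual-definability conditions are preserved conjunctively across the $k$ component formulas. \qed
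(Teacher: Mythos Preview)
Your proof is correct and is precisely the routine verification the paper has in mind: the paper states the proposition without proof, bundling it with the other closure properties under the remark that their ``verification is routine,'' and your argument---expressing the direct-sum condition as a first-order conjunction of the block formulas $\phi_i$ with the off-diagonal vanishing equations, then in the effectively eventually definable case taking the diagonal sequence $\Theta^{(m)}$ and $N=\max_i m_i$---is exactly that routine check.
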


\section{Context-free languages}
\label{sec:context-free}

For the sake of self-containment 
and in order to fix notation, we recall the basic properties 
and notions concerning the family
of context-free languages which can be found in all 
introductory textbooks on theoretical computer science (see, for instance, \cite{Hu}).

A  \emph{context-free grammar} $G$ is a quadruple $\langle V, \Sigma, P, S \rangle$ 
 where $\Sigma$  is the  alphabet of   \emph{terminal symbols},
$V$ is the set of \emph{nonterminal symbols}, $P$ is the set of 
\emph{rules}, and $S$ is the \emph{axiom} of the grammar.  
A word over the alphabet $\Sigma$ is called {\em terminal}.
As usual, the nonterminal symbols are denoted by uppercase 
letters $A$, $B$, \ldots. A typical rule of the grammar is written
as $A\rightarrow \alpha$. 
 The 
\emph{derivation} relation of $G$ is denoted by $ \displaystyle\mathop{\Rightarrow}^{*}$.

A grammar is \emph{linear} if 
every right hand side $\alpha$ contains at most one occurrence of nonterminal symbols,
i.e., if it belongs to $\Sigma^{*}  \cup \Sigma^{*} V \Sigma^{*}$.

The idea of the following
notation is to consider the set of all pairs of 
left and right contexts in the terminal alphabet of a self-embedding nonterminal symbol.
In the next definition, the initial 
``$C$'' is meant to suggest the term ``context''
as justified by the following.

\begin{definition}
\label{de:contexts}
With each nonterminal symbol $A\in V$ associate its
\emph{terminal contexts} defined as  
$$
\G_A = \{(\alpha, \beta) \in \Sigma ^* 
\times \Sigma ^*: A \displaystyle\mathop{\Rightarrow}^{*} \alpha A \beta \}.
$$
\end{definition}

\noindent
It is convenient to define the sandwich operation also for  languages in the following way.  With $C_A$ as above and  $L'$ an arbitrary language, 
 we define
 $$
C_A \diamond L' = \{uwv\mid (u,v)\in {C_A} \mbox{ and } w\in {L'}\}
$$
As the proof of the main theorem proceeds by induction
on the number of nonterminal symbols, we need to show how to 
recombine a grammar from simpler ones
obtained by choosing an arbitrary non-axiom symbol
as the new axiom and by canceling all the rules 
involving $S$. This is the reason for introducing
the next notation

 \begin{definition}
 \label{de:v-prime}
 Let $G = \langle V, \Sigma, P, S \rangle$ be a  context-free grammar. 
Set $V' = V \setminus \{S\}$.
 
For every $A\in V'$, define the context-free      
grammar $G_{A} = \langle V', \Sigma, P_{A}, A \rangle$ where
the set $P_{A}$  consists of all the rules $B \rightarrow \gamma$ of $G$  of the form
$$
\quad B \in V', \quad \gamma \in (V' \cup \Sigma)^*
$$
and denote by $L_A$ the language of all terminal words 
generated by the grammar $G_A$.
 \end{definition}
The next definition introduces the language of terminal words
obtained in a derivation where   
$S$ occurs at the start only.

 \begin{definition}
 \label{de:L-prime}
Let  $L'(G)$  denote the set of all the words 
of $\Sigma^*$ which admit a derivation
   \begin{equation}
   \label{eq:derivation}
   S \Rightarrow \gamma_{1} \Rightarrow \cdots  \Rightarrow \gamma_\ell  \Rightarrow w
   \end{equation}
where, for every $i=1, \ldots, \ell$,   $\gamma _i\in (V' \cup \Sigma)^*$.

 \end{definition}

The language $L'(G)$ can be easily expressed in terms 
of the languages $L_{A}$ for all $A\in V'$. Indeed,
consider the set of all rules of the grammar $G$ of the form
 \begin{equation}
  \label{eq:beta}
S\rightarrow \beta, \quad \beta \in (V' \cup \Sigma)^*
\end{equation}

Factorize every such $\beta$ as
 \begin{equation} 
 \label{eqfla:rob}
 \beta =w_{1}A_{1}w_{2}A_{2} \cdots w_{\ell}A_{\ell} w_{j_{\ell+1}}
 \end{equation}
where
$w_{1},  \ldots, w_{\ell+1}\in \Sigma^*$ and $A_{1}, A_{2}, 
\ldots A_{\ell} \in V'$.
The  following is a standard exercise.

\begin{lemma}
\label{le:L'}
 With the notation of (\ref{eqfla:rob}), 
the language $L'(G)$ is  the (finite)
 union of the languages 
$$
w_{1}L_{A_{1}}w_{2}L_{A_{2}} \cdots w_{\ell}L_{A_{\ell}} w_{j_{\ell+1}}
$$
when $\beta$ ranges over all rules  (\ref{eq:beta}). 
 \end{lemma}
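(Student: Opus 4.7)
The plan is to prove the two inclusions separately by a direct analysis of derivations, exploiting the fact that the forbidden symbol $S$ appears on no right-hand side of any rule used after the first step.

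For the inclusion from left to right, I would start with an arbitrary word $w \in L'(G)$ and consider a derivation as in (\ref{eq:derivation}). Since $\gamma_{1} \in (V' \cup \Sigma)^{*}$, the first step must apply a rule $S \rightarrow \beta$ with $\beta \in (V' \cup \Sigma)^{*}$, i.e., a rule of the form (\ref{eq:beta}). Factor $\beta$ as in (\ref{eqfla:rob}). Since the axiom $S$ occurs only at the start of the derivation, the only rules of $P$ applied after the first step have a left-hand side in $V'$, hence they belong to every $P_A$ with $A\in V'$. By tracking which nonterminal of $\beta$ each subsequent rewriting acts on, one decomposes the tail of the derivation into $\ell$ subderivations $A_{i} \Rightarrow^{*} u_{i}$, each of which uses only rules of $G_{A_{i}}$. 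Therefore $u_{i} \in L_{A_{i}}$ and $w = w_{1}u_{1}w_{2}u_{2}\cdots w_{\ell}u_{\ell}w_{\ell+1}$, giving the required inclusion.

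For the reverse inclusion, pick any rule $S \rightarrow \beta$ of the form (\ref{eq:beta}), factor it as in (\ref{eqfla:rob}), and pick arbitrary $u_{i} \in L_{A_{i}}$. Each $u_{i}$ admits a derivation $A_{i} \Rightarrow^{*} u_{i}$ in $G_{A_{i}}$, and in particular every rule used has left-hand side in $V'$ and right-hand side in $(V' \cup \Sigma)^{*}$. Concatenating these subderivations after the initial step $S \Rightarrow \beta$ produces a derivation of $w_{1}u_{1}\cdots w_{\ell}u_{\ell}w_{\ell+1}$ whose intermediate sentential forms all lie in $(V' \cup \Sigma)^{*}$, witnessing membership in $L'(G)$.

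Neither direction presents a real obstacle: the construction is the usual substitution argument for context-free derivations, and the essential point is that after the first step the axiom $S$ is completely banned, which is exactly the condition built into the definition of each $G_{A}$ via $P_{A}$. The only mildly delicate bookkeeping is to verify that the derivations of the various $A_{i}$ can be spliced together in any order (say leftmost) to yield a single legal derivation in $G$, which is immediate since the rewrites in the different occurrences are independent and no $S$ is ever reintroduced.
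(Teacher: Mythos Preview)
Your argument is correct and is exactly the routine verification the paper has in mind: the authors do not actually prove this lemma but dismiss it as ``a standard exercise.'' The only tiny point you leave implicit is that the rules used after the first step also have their \emph{right}-hand sides in $(V'\cup\Sigma)^{*}$ (not merely their left-hand sides), but this is immediate from the hypothesis $\gamma_{i}\in(V'\cup\Sigma)^{*}$ for all $i$, so there is no real gap.
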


\begin{proposition}\label{pr:cfl-decomposition}
With the previous notation $L$ is a finite
union of languages of the form $C_{S}\diamond L''$ where
$$
L''= 
w_{1}L_{A_{1}}w_{2}L_{A_{2}} \cdots w_{\ell}L_{A_{\ell}} w_{\ell+1}
$$
\end{proposition}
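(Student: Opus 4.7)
The plan is to establish the identity $L = C_{S} \diamond L'(G)$ and then invoke Lemma \ref{le:L'} together with the distributivity of $\diamond$ over finite union in its second argument (an immediate consequence of its definition). These two ingredients convert the identity into the desired finite union decomposition $L = \bigcup (C_{S} \diamond L'')$, where $L''$ runs over the components $w_{1} L_{A_{1}} \cdots w_{\ell} L_{A_{\ell}} w_{\ell+1}$ furnished by Lemma \ref{le:L'} applied to each rule (\ref{eq:beta}).

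The inclusion $C_{S} \diamond L'(G) \subseteq L$ follows directly from the definitions. If $(\alpha,\beta) \in C_{S}$ then $S \displaystyle\mathop{\Rightarrow}^{*} \alpha S \beta$ in $G$, and if $w' \in L'(G)$ then by Definition \ref{de:L-prime} there is a derivation $S \displaystyle\mathop{\Rightarrow}^{*} w'$ whose intermediate sentential forms all lie in $(V' \cup \Sigma)^{*}$; applying the latter derivation to the distinguished $S$ of $\alpha S \beta$ yields $S \displaystyle\mathop{\Rightarrow}^{*} \alpha w' \beta$, so $\alpha w' \beta \in L$.

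The opposite inclusion $L \subseteq C_{S} \diamond L'(G)$ is the core step, to be handled by a derivation-tree argument. Given $w \in L$, fix a derivation tree $T$ of $w$ rooted at $S$. Because $T$ is finite and its root carries the label $S$, one can select an $S$-labeled node $v$ of $T$ that is minimal among $S$-labeled nodes with respect to the ancestor ordering, i.e.\ no proper descendant of $v$ is labeled $S$. The subtree $T_{v}$ then encodes a derivation $S \displaystyle\mathop{\Rightarrow}^{*} w'$, where $w'$ is the terminal frontier of $T_{v}$, whose intermediate sentential forms all lie in $(V' \cup \Sigma)^{*}$; hence $w' \in L'(G)$. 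Collapsing $T_{v}$ to its root $v$ turns $T$ into a partial derivation tree witnessing $S \displaystyle\mathop{\Rightarrow}^{*} \alpha S \beta$ with $\alpha w' \beta = w$, so $(\alpha,\beta) \in C_{S}$ and $w \in C_{S} \diamond L'(G)$. The only delicate point is this tree decomposition, and even there the essential observation, namely the existence of a minimal $S$-labeled node in any finite tree rooted at $S$, is elementary; everything else reduces to bookkeeping with Lemma \ref{le:L'} and the distributivity of the sandwich operation.
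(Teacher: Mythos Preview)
Your proof is correct and follows essentially the same approach as the paper: both establish $L = C_{S}\diamond L'(G)$ by isolating a ``deepest'' occurrence of $S$ in a derivation of $w\in L$, then invoke Lemma~\ref{le:L'}. Your derivation-tree formulation (picking a minimal $S$-labeled node) is a clean variant of the paper's argument, which works with derivation sequences and therefore needs an extra reordering step to make the context $(\alpha,\beta)$ terminal; the tree version sidesteps this because the frontier outside $T_{v}$ is automatically terminal.
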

 
\begin{proof}
In order to prove the inclusion of the right- 
into left- hand side,  it suffices
to consider $w=  \alpha u \beta,$ with $u\in L'(G) $ and $( \alpha, \beta)\in \G_S.$
One has $S  \displaystyle\mathop{\Rightarrow}^{*}u$ and 
 $S  \displaystyle\mathop{\Rightarrow}^{*} \alpha S \beta$
 and thus  $S  \displaystyle\mathop{\Rightarrow}^{*} \alpha S \beta \mathop{\Rightarrow}^{*}  \alpha u \beta.$

Let us prove the opposite inclusion. A word  $w\in L$ admits a derivation
 $S   \displaystyle\mathop{\Rightarrow}^{*} w $. 
 If the symbol $S$ does not occur in the derivation except 
 at the start of the derivation,  then   $w\in L'(G)$. Otherwise
 factor this derivation into  
 $S   \displaystyle\mathop{\Rightarrow}^{*} \alpha S \beta  \mathop{\Rightarrow}^{*} w $ 
such that $S$ does not occur in the second part of the derivation
except in the sentential form $\alpha S \beta$.
Reorder the derivation $\displaystyle\alpha S \beta  \mathop{\Rightarrow}^{*} w $
into $\displaystyle\alpha S \beta \mathop{\Rightarrow}^{*} \gamma S \delta  \mathop{\Rightarrow}^{*} w $
so that $\gamma,  \delta\in \Sigma^*$.
This implies  $w=\gamma u \delta$ for some word $u\in L'(G)$, 
 completing the 
proof.	\qed
\end{proof}

 \section{The main results}
Here we prove that the problem is decidable for two
families of languages, namely the linear context-free languages and the 
linear bounded languages.

\subsection{The bounded semilinear languages}

We solve the easier case. We recall that a  \emph{bounded semilinear}  language
is a finite union of \emph{linear} languages which are languages of  the form
\begin{equation}
\label{eq:linear}
L=\{w_{1}^{n_{1}}\cdots w_{k}^{n_{k}} \mid(n_{1},\ldots, n_{k}) \in R\}
\end{equation}
for some fixed words $w_{i}\in \Sigma^*$ for $i=1, \ldots, k$ and $R\subseteq \N^k$
is a linear set, i.e., there exists $v_{0}, v_{1}, \ldots, v_{p}\in \N^k$
such that 
$$
R=\{v_{0}+ \lambda_{1} v_{1} + \cdots + \lambda_{p} v_{p}  \mid \lambda_{1},  \ldots,  \lambda_{p}\in \N\}
$$

\begin{proposition}\label{propsemilinear}
If $L$ is  bounded semilinear then  
its closure $\Closure{\varphi(L)}$ is semialgebraic.
Furthermore, if the quantum automaton ${\cal Q}$ is rational, 
the $(L, Q)$ intersection 
is decidable.
\end{proposition}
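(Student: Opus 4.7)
The plan is first to reduce to a single linear language and then to apply a block-diagonal construction which will transform the coupled $k$-variable exponent problem into a finitely generated commutative monoid, to which the earlier results of the paper apply directly. Since both semialgebraicity and effective eventual definability are closed under finite union (Proposition~\ref{pr:union-prod-phi}), it suffices to treat a single linear $L=\{w_1^{n_1}\cdots w_k^{n_k}\mid(n_1,\ldots,n_k)\in R\}$ with $R=v_0+\N v_1+\cdots+\N v_p$. Write $M_i=\varphi(w_i)\in O_n$. The key construction is to pass to dimension $kn$ and define, for $j=0,1,\ldots,p$, the block-diagonal orthogonal matrix
$$
G_j \;=\; M_1^{(v_j)_1}\oplus M_2^{(v_j)_2}\oplus\cdots\oplus M_k^{(v_j)_k}\;\in\;O_{kn}.
$$
The crucial observation is that the $G_j$ commute pairwise: on each of the $k$ blocks they are powers of the same matrix $M_i$.

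Consequently the ``lifted'' set
$$
B \;=\; \{M_1^{n_1}\oplus\cdots\oplus M_k^{n_k}\mid(n_1,\ldots,n_k)\in R\} \;=\; G_0\cdot\{G_1^{\lambda_1}\cdots G_p^{\lambda_p}\mid\lambda\in\N^p\}
$$
is a left coset of a finitely generated commutative submonoid. Applying Theorem~\ref{th:semigroup-in-compact} with $S=E=\{G_1,\ldots,G_p\}$ gives that the closure of this submonoid coincides with $\Closure{\langle G_1,\ldots,G_p\rangle}$, and Theorem~\ref{th:topological-closure-of-monoid} tells us this is an algebraic subgroup of $O_{kn}$. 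Hence $\Closure{B}=G_0\cdot\Closure{\langle G_1,\ldots,G_p\rangle}$ is semialgebraic. Setting $f(X_1,\ldots,X_k)=X_1\cdots X_k$ and letting $\rho\subseteq O_n^k$ be the set of $k$-tuples corresponding to $B$ via the block-diagonal embedding, Theorem~\ref{th:fundamental} yields $\Closure{\varphi(L)}=\Closure{f(\rho)}=f(\Closure{\rho})$, and the latter is precisely $\{X_1\cdots X_k\mid X_1\oplus\cdots\oplus X_k\in\Closure{B}\}$. By Proposition~\ref{pr:product-of-blocks} this is semialgebraic, proving the first half of the statement.

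For decidability, rationality of $\cal Q$ makes every $G_j$ rational. Proposition~\ref{pr:finitely-generated}, applied to $E=\{G_1,\ldots,G_p\}$, then guarantees that $\Closure{\langle G_1,\ldots,G_p\rangle}$ is effectively eventually definable. Left multiplication by the singleton $\{G_0\}$ preserves this property by Proposition~\ref{pr:union-prod-phi}, and the block-product extraction preserves it by Proposition~\ref{pr:product-of-blocks}; therefore $\Closure{\varphi(L)}$ is effectively eventually definable. Invoking Proposition~\ref{pr:meta-resuositionlt} yields decidability of the $(L,{\cal Q})$ intersection problem. The main conceptual step is the block-diagonal reduction: the exponents $n_i$ in $L$ are coupled through the shared parameters $\lambda_j$, and it is the fibrewise commutativity of the $G_j$ that decouples the argument into a single finitely generated commutative monoid, to which Theorems~\ref{th:semigroup-in-compact} and~\ref{th:topological-closure-of-monoid} apply off the shelf.
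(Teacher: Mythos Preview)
Your proof is correct and follows essentially the same approach as the paper: reduce to a single linear piece, pass to block-diagonal matrices $G_j\in O_{kn}$, apply the results on closures of finitely generated monoids to $\{G_1,\ldots,G_p\}$, left-translate by $G_0$, and then extract the block product via Proposition~\ref{pr:product-of-blocks}. Your explicit remark on the pairwise commutativity of the $G_j$ (which is what makes $\{G_1^{\lambda_1}\cdots G_p^{\lambda_p}\}$ coincide with the full monoid $G^*$) and your invocation of Theorem~\ref{th:fundamental} to justify commuting the closure with the block-product map are slightly more detailed than the paper's writeup, but the argument is the same.
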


 \begin{proof}
 Because the semialgebraic sets are closed under finite union,
it suffices to consider 
the case where the language is of the form (\ref{eq:linear}).
For $t=0, \ldots, p$ set $v^T_{t}=(v_{t,1}, \ldots, v_{t,k})$  and
consider the orthogonal matrices 
$$
g_{t}=\left ( 
 \begin{array}{cccc}
  \varphi (w_{1})^{v_{t,1}}  & 0 & 0 & 0 \\
0 &    \varphi  (w_{2})^{v_{t,2}}  & 0 & 0 \\
\vdots &     \vdots & \vdots  &    \vdots\\
0  &    0  &   0 &  \varphi  (w_{k})^{v_{t,k}}
\end{array}
\right ) 
$$
Set $G=\{g_{i}\mid i=1, \ldots, p\}$. In virtue of 
Theorem \ref{th:semigroup-in-compact} and Proposition 
\ref{pr:finitely-generated}  the set
$\Closure{G^*}$ is semialgebraic
and it is effectively eventually definable if the coefficients 
of the quantum automata are rational.
By Corollary \ref{cor:closure-of-product} we have 
$$
\Closure{g_{0}G^*}
= g_{0}\Closure{G^*}
$$
and by Proposition \ref{pr:union-prod-phi} 
this product is semialgebraic (resp. and  
effectively eventually definable if the coefficents 
of the quantum automaton are rational).
By  Proposition \ref{pr:product-of-blocks},
 $\Closure{\varphi(L)}$ is semialgebraic (resp. and  
effectively eventually definable if the coefficients 
of the quantum automaton are rational).
In the latter case the $(L, Q)$ intersection is decidable by Proposition 
\ref{pr:meta-resuositionlt} which completes the proof.
\qed\end{proof}

 \subsection{The case of context-free  languages}
\label{subsec:cf-lin}

Here we show that $\closure(\varphi(L))$ is effectively eventually
definable for languages generated by linear grammars
and rational quantum automata.

We adopt the notation from Section \ref{sec:context-free} for   context-free grammars.
We recall the following notion that will be used in the proof of the next result (see \cite{js}). A subset of a monoid $M$ is
\emph{regular} if it is recognized by some finite $M$-automaton
which differs from an ordinary finite 
nondeterministic automaton
over the free monoid by the fact the 
transitions are labeled by elements in $M$.

\begin{proposition}
If $L$ is generated by a context-free grammar, then 
$\Closure{\varphi(L)}$ is semialgebraic. Furthermore,  
if the grammar is linear and  if the quantum automaton is rational
then $\Closure{\varphi(L)}$  is effectively 
eventually  definable and the $(L, Q)$ intersection
is decidable.
\end{proposition}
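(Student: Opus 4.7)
The plan is to prove both assertions by induction on the number of nonterminals $|V|$ of the grammar. At each step I would apply Proposition \ref{pr:cfl-decomposition} to write $L$ as a finite union of languages of the form $C_S \diamond L''$, where $L''$ is a concatenation of fixed terminal words and of languages $L_{A_j}$ produced by sub-grammars $G_{A_j}$ having strictly fewer nonterminals, and still linear if $G$ was. By induction each $\Closure{\varphi(L_{A_j})}$ is semialgebraic (resp.\ effectively eventually definable in the linear case); Corollary \ref{cor:closure-of-product} and Proposition \ref{pr:union-prod-phi} then give the same for $\Closure{\varphi(L'')}$. By Proposition \ref{pr:sandwich} the entire inductive step reduces to handling $\Closure{\mathcal{A}_S}$, where $\mathcal{A}_S = \{(\varphi(\alpha), \varphi(\beta)) : (\alpha,\beta) \in C_S\} \subseteq O_n \times O_n$.

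To access $\mathcal{A}_S$, I would embed pairs as block-diagonal matrices in $O_{2n}$ via $\Phi(\alpha, \beta) = \varphi(\alpha) \oplus \varphi(\beta)^T$. A short check shows $\Phi$ is a monoid morphism when $C_S$ is endowed with the twisted product $(\alpha_1,\beta_1)\cdot(\alpha_2,\beta_2) = (\alpha_1\alpha_2, \beta_2\beta_1)$, and concatenating derivations $S \Rightarrow^* \alpha_1 S \beta_1 \Rightarrow^* \alpha_1\alpha_2 S \beta_2\beta_1$ shows $C_S$ is closed under this product. Hence $\Phi(C_S)$ is a submonoid of the compact group $O_{2n}$. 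For the general CFL case, Theorem \ref{th:semigroup-in-compact} combined with Theorem \ref{th:topological-closure-of-monoid} yields that $\Closure{\Phi(C_S)} = \Closure{\langle \Phi(C_S) \rangle}$ is algebraic; transferring through the polynomial bijection $(X, Y) \leftrightarrow X \oplus Y^T$ makes $\Closure{\mathcal{A}_S}$ semialgebraic, as required.

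For the linear case I would invoke Proposition \ref{pr:finitely-generated}, which requires a finite $E \subseteq \Q^{2n\times 2n}$ with $\langle E \rangle = \langle \Phi(C_S) \rangle$. I would build the finite directed graph $\mathcal{F}$ with nodes $V$ and one edge from $A$ to $B$, labeled $\varphi(\alpha)\oplus\varphi(\beta)^T$, for each rule $A \to \alpha B \beta$, then restrict to the strongly connected component of $S$. For each node $A$ fix walks from $S$ to $A$ and back with matrix products $M_{S,A}$ and $M_{A,S}$ (with $M_{S,S} = I$), and let $E$ consist of the round trips $N_A = M_{S,A} M_{A,S}$ together with the single-edge cycles $h_e = M_{S,A}\, g_e\, M_{B,S}$ for each edge $e \colon A \to B$ with label $g_e$. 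Every $g \in \Phi(C_S)$ corresponds to a closed walk $e_1 \cdots e_r$ from $S$, and using the identity $M_{A_i,S}\, N_{A_i}^{-1}\, M_{S,A_i} = I$ one factors $g = h_{e_1} N_{A_1}^{-1} h_{e_2} \cdots N_{A_{r-1}}^{-1} h_{e_r} \in \langle E \rangle$. Proposition \ref{pr:finitely-generated} then gives effective eventual definability of $\Closure{\Phi(C_S)}$, which transfers to $\Closure{\mathcal{A}_S}$; decidability of the intersection follows by feeding this into Proposition \ref{pr:meta-resuositionlt}.

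The main obstacle is precisely the construction of the finite generator set $E$ for $\langle \Phi(C_S) \rangle$ in the linear case. It relies essentially on $C_S$ being the set of labels of closed walks in a finite graph, allowing a spanning-tree/fundamental-group style argument. This structure disappears for non-linear grammars whose derivations branch, which is why the general context-free case yields only semialgebraicity and no effective procedure.
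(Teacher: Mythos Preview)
Your proposal is correct and follows essentially the same route as the paper: induction on $|V|$ via Proposition~\ref{pr:cfl-decomposition}, the block-diagonal embedding $(\alpha,\beta)\mapsto \varphi(\alpha)\oplus\varphi(\beta)^T$ turning $C_S$ into a submonoid of $O_{2n}$, algebraicity of its closure in the general case, and a finite generating set for the subgroup it generates in the linear case.

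The one genuine difference is in that last step. The paper observes that, for a linear grammar, $\Phi(C_S)$ is a \emph{regular} subset of $O_{2n}$ (recognized by the $O_{2n}$-automaton whose states are the nonterminals and whose transitions come from the rules $A\to aBb$) and then invokes the Anisimov--Seifert theorem that the subgroup generated by a regular subset of a group has an effectively computable finite generating set, after which Proposition~\ref{pr:finitely-generated} applies. You instead build that generating set by hand, via the spanning-tree/fundamental-group argument on the transition graph. Your construction is exactly a direct proof of the special case of Anisimov--Seifert needed here; it makes the argument self-contained at the cost of a few extra lines, while the paper's citation keeps the proof shorter but relies on an external result. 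Either way the content is the same.
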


\begin{proof}
With the notation of Section
\ref{sec:context-free} the language $L$ is a finite
union of languages of the form
$C_{S}\diamond L'' $ with
\begin{equation}
\label{eqfla:claim2-a} 
L''= 
w_{1}L_{A_{1}}w_{2}L_{A_{2}} \cdots w_{\ell}L_{A_{\ell}} w_{\ell+1}
\end{equation}
where, for every  $1\leq i \leq \ell +1$, 
 $w_{i}\in \Sigma^*$ and $A_{i}\in V'$.
It suffices to show by induction on the number
of nonterminal symbols that, with the previous
notation, the subsets  
\begin{equation}
\label{eq:L-double-prime}
 \Closure{\varphi(C_{S}\diamond L'')} 
\end{equation}
are semialgebraic in all cases and  effectively eventually definable when the quantum automaton is rational and the grammar of the language is linear.
As a preliminary remark let us show this property for 
$\Closure{\varphi(C_{S})}$.
Define $\varphi^T: \Sigma^*\rightarrow \R^{n\times n}$
as $\varphi^T(u)=\varphi(u)^T$ and set 
 $$M=\{\varphi(a)\oplus \varphi^T(b)\mid (a,b)\in C_{S}\}.
 $$
%

Observe that $M$ is a monoid since
if $\varphi(a)\oplus \varphi^T(b)$ and $\varphi(c)\oplus \varphi^T(d)$ are 
in $M$ then we have
$$
\varphi^T(b) \varphi^T(d)=\varphi(b)^T \varphi(d)^T
=(\varphi(d) \varphi(b))^T= \varphi(d b)^T= \varphi^T(d b)
$$
which yields 
$$
\begin{array}{l}
(\varphi(a)\oplus \varphi^T(b)) (\varphi(c)\oplus \varphi^T(d))
=\varphi(a c)\oplus \varphi^T(db).
\end{array}
$$
As a first consequence, 
by Corollary \ref{cor:L-star-is-algebraic}, 
$\Closure{M}$ is algebraic.
Furthermore  we can show that $\Closure{M}$ is effectively eventually definable. Indeed $M$ is a regular submonoid
of  the group of orthogonal matrices $O_{n}\oplus  O_{n}$
if the grammar is linear.
Precisely, it is recognized by the finite 
$O_{2n}$-automaton whose states
are the nonterminal symbols, the transitions
are of the form $A\trans{\varphi(a)\oplus  \varphi^T(b)} B$ where 
$A\rightarrow aBb$ is a rule of the grammar
and where the initial and final states coincide 
with $S$.  Now, the subgroup generated 
by a regular subset of a monoid has an 
effective finite generating set  \cite{Anisimov} (see also \cite{js})  
and thus by  Proposition \ref{pr:finitely-generated}
$\Closure{M}$ is effectively eventually definable
if $\varphi(\Sigma^*)\subseteq \Q^{n\times n}$.

  We now proceed with the proof by induction on the number of nonterminal symbols.
If  the set of nonterminal symbols is reduced to $S$ 
then $L$ is reduced to $C_S\diamond L'(G)$ and $L'(G)$ is finite. 
We may further assume that there is a unique
terminal rule $S\rightarrow w$.
By Theorem \ref{th:fundamental} we have
$$
\Closure{\varphi(L)}
=\{X\varphi(w)Y^T\mid X\oplus Y\oplus \{\varphi(w)\}\in \Closure{M\oplus \varphi(w)}\}
$$
By Corollary \ref{cor:closure-of-product} we have
$$\Closure{M\oplus \varphi(w)}
= \Closure{M}\oplus \Closure{\varphi(w)}
= \Closure{M}\oplus \varphi(w)
$$
which, by Proposition \ref{pr:blocks-of-product},
is semialgebraic, resp. effectively eventually definable.
In that latter case the $(L, Q)$ intersection is decidable.

Now assume $V$ contains more than one nonterminal symbol.
We first prove that for each  nonterminal symbol $A$, 
$
\Closure{\varphi(C_{S}\diamond L_{A})}
$ 
is  semialgebraic in the general case 
and effectively eventually definable when the grammar is 
linear and the quantum automaton is rational.
By Theorem \ref{th:fundamental} 
and  Corollary \ref{cor:closure-of-product},
 $\Closure{\varphi(C_{S}\diamond L'')}$
is the subset 
$$
\Closure{\varphi(C_{S}\diamond L'')}
=\{XZY^T\mid X\oplus Y\oplus Z\in \Closure{M}\oplus \Closure{\varphi(L''}\}
$$
 with $L''$ as in  (\ref{eqfla:claim2-a}),
 i.e.,
$$
\{XZY^T\mid X\oplus Y\oplus Z \in \Closure{M}\oplus \Closure{\varphi(w_{1})\varphi(L_{A_{1}})\cdots 
\varphi(w_{\ell})\varphi(L_{A_{\ell}})\varphi(w_{\ell+1})}\}
$$
 By Cororally  \ref{cor:closure-of-product}
we have
$$
\begin{array}{ll}
& \Closure{\varphi(w_{1})\varphi(L_{A_{1}})\cdots 
\varphi(w_{\ell})\varphi(L_{A_{\ell}})\varphi(w_{\ell+1}))\\
= &\varphi(w_{1})\Closure{\varphi(L_{A_{1}})}\cdots 
\varphi(w_{\ell})\Closure{\varphi(L_{A_{\ell}})}\varphi(w_{\ell+1})}
\end{array}
$$
which shows, via Proposition \ref{pr:union-prod-phi}
and by induction hypothesis 
that this subset is semialgebraic, resp. effectively,
eventually definable. Then  its direct sum with
$\Closure{M}$ is semialgebraic and
effectively,
eventually definable if the grammar is
linear and the quantum automaton is rational. We conclude by applying Proposition
\ref{pr:product-of-blocks}.  \qed
  \end{proof}

  \section{Complement of context-free languages}\label{al}
In this section we prove that there is a language $L$ such that $(i)$ the 
complement of $L$ is  context-free and $ (ii)$  
$\Closure{\varphi (L)}$  is not semialgebraic.

 Given a binary representation of a real $\alpha=0.b_{1}\cdots b_{n} \cdots $, we define its \emph{approximation 
sequence} $(\alpha[k])_{k\geq 0}$ as the sequence of its successive truncations   $\alpha[k]=0.b_{1}\ldots b_{k}$.

\begin{lemma}
Let  $0<\alpha<1$ be an irrational. There exist infinitely many 
rationals $\frac{q}{n}$ such that 
$$
\left |\alpha[1+2 \ell(n)]-\frac{q}{n}\right |<\frac{1}{n^2}
$$
holds, where $\ell(n)=\lfloor \log_{2} n\rfloor$.
\end{lemma}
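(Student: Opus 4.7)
My plan is to derive the lemma from Dirichlet's classical approximation theorem. Dirichlet guarantees infinitely many rationals $p/n$ in lowest terms with $|\alpha - p/n| < 1/n^2$; one may in fact take the $(p,n)$ to be the continued-fraction convergents $p_k/q_k$ of $\alpha$. Setting $m := 1 + 2\ell(n)$, the defining inequality $2^{\ell(n)} \leq n < 2^{\ell(n)+1}$ gives $n^2/2 < 2^m \leq 2n^2$, so the truncation error satisfies $0 < \alpha - \alpha[m] < 2^{-m} < 2/n^2$. Both $|\alpha - p/n|$ and $|\alpha - \alpha[m]|$ are thus of order $1/n^2$, and the task is to combine them sharply to recover the strict bound $1/n^2$.

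The identity $\alpha[m] - p/n = (\alpha - p/n) - (\alpha - \alpha[m])$ is the starting point. Since $\alpha - \alpha[m] > 0$, I would restrict attention to Dirichlet pairs with $p/n < \alpha$---these correspond to the even-indexed convergents of $\alpha$, which form an infinite subsequence. Both summands are then positive and cancel partially. Two subcases have to be distinguished: if $p/n \leq \alpha[m]$, then $0 \leq \alpha[m] - p/n \leq \alpha - p/n < 1/n^2$ directly; if instead $\alpha[m] < p/n < \alpha$, then $|\alpha[m] - p/n| < \alpha - \alpha[m] < 2^{-m}$, and this is $< 1/n^2$ precisely when $2^m > n^2$, i.e.\ when $n < \sqrt{2}\cdot 2^{\ell(n)}$, so that $n$ lies in the lower half of its dyadic interval.

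The main obstacle will be to guarantee that infinitely many of the selected convergents fall into a workable subcase. The even-indexed denominators satisfy $q_{2(k+1)} > 2\, q_{2k}$ (since $q_{2k+2} \geq q_{2k+1} + q_{2k}$ and $q_{2k+1} > q_{2k}$), so that the fractional parts $\{\log_2 q_{2k}\}$ evolve by strictly positive, generically non-integer increments. A short counting or equidistribution argument should then show that these fractional parts cannot remain concentrated in the upper subinterval $[1/2,1)$ indefinitely, forcing infinitely many $q_{2k}$ to lie in the range $[2^{\ell(q_{2k})},\sqrt{2}\cdot 2^{\ell(q_{2k})})$ and thereby validating the second subcase. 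This distributional step is the delicate point I expect to require the most care.
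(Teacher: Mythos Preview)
The paper's argument is far more direct than yours: it applies the single triangle inequality
\[
\Bigl|\alpha[1+2\ell(n)]-\tfrac{q}{n}\Bigr|\;\le\;\bigl|\alpha[1+2\ell(n)]-\alpha\bigr|+\Bigl|\alpha-\tfrac{q}{n}\Bigr|,
\]
bounds the first term by $2^{-(1+2\ell(n))}$, and bounds the second by $\tfrac{1}{\sqrt{5}\,n^{2}}$ for the infinitely many Hurwitz approximants---no case split, no parity of convergents, no distribution argument. (As printed the paper then asserts $2^{-2\ell(n)}\le n^{-2}$, which is actually backward since $2^{\ell(n)}\le n$; the two-line estimate would go through verbatim if $\lfloor\cdot\rfloor$ were replaced by $\lceil\cdot\rceil$ in the definition of $\ell$, so the elaborate machinery you set up is in any case not what the authors intend.)

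Your proposal has a genuine gap at the step you yourself flag as delicate. The inequality $q_{2(k+1)}>2\,q_{2k}$ says only that successive values of $\log_{2}q_{2k}$ differ by more than~$1$; modulo~$1$ this carries no information, and there is no equidistribution theorem for the sequence $\{\log_{2}q_{2k}\}$ valid for an arbitrary irrational. Concretely, given $q_{2k-2}$ and $q_{2k-1}$, the candidates $q_{2k}=a_{2k}q_{2k-1}+q_{2k-2}$ for $a_{2k}=1,2,\dots$ become arbitrarily dense on a logarithmic scale as $a_{2k}$ grows, so one may recursively choose the even-indexed partial quotients to place every $q_{2k}$ in the upper half $[\sqrt{2}\cdot 2^{j},2^{j+1})$ of its dyadic block. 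Your Case~2 bound would then fail for every even convergent, and you have given no independent argument that Case~1 must recur infinitely often (choosing the odd-indexed partial quotients large makes $\alpha-p_{2k}/q_{2k}$ tiny and tends to push you into Case~2). The sketch does not close as written.
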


\begin{proof}
By the triangular inequality we have 
$$
\left |\alpha[1+2 \ell(n)]-\frac{q}{n}\right | \leq \left |\alpha[1+2 \ell(n)]-\alpha \right | + \left |\alpha -\frac{q}{n}\right |
$$
By  the definition of the approximation sequence we get
$$
|\alpha[1+2 \ell(n)]-\alpha|
\leq \frac{1}{2^{1+2 \ell(n)}} = \frac{1}{2} \times  \frac{1}{2^{2 \lfloor \log_{2} n\rfloor}}
\leq \frac{1}{2} \times \frac{1}{n^{2}}
$$
Now by Hurwitz Theorem there exist infinitely many rationals $\frac{p}{n}$
for which 
$$
 \left |\alpha -\frac{p}{n}\right |\leq \frac{1}{\sqrt{5}} \times \frac{1}{n^{2}}
$$
We conclude by combining these last two inequalities.\qed
\end{proof}

We now fix an irrational $0<\alpha<1$. Consider the
orthogonal matrix 
$$
M_{\alpha}=
\left(
\begin{array}{ll}
\cos 2\pi \alpha & \sin 2\pi \alpha \\
-\sin  2\pi \alpha & \cos 2\pi \alpha
\end{array}
\right)
$$
and  the morphism $\varphi_{\alpha}: b^*\rightarrow O_{n}$ from the free monoid generated by the letter $b$ and the
group $O_n$
defined by $\varphi_{\alpha}(b)=M_{\alpha}$. Furthermore  set 
\begin{equation}
\label{eq:L-alpha}
L(\alpha)=\left \{b^n\mid \exists \  q\in \N \mid   \left |\alpha[1+2\ell(n)]-\frac{q}{n}\right |< \frac{1}{n^2}\right \}
\end{equation}

\begin{lemma}
\label{le:non-semialgebraic}
  If $0<\alpha<1$ is an irrational,
  the topological closure   $\Closure{\varphi (L(\alpha)}$ 
  is not semialgebraic.
\end{lemma}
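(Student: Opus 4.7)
The plan is to show that $\Closure{\varphi(L(\alpha))}$ is a countably infinite subset of $SO(2)$, and then to rule out semialgebraicity by a cardinality argument coming from the cell decomposition theorem. First I observe that $\varphi(b^n) = M_{\alpha}^n$ is rotation by the angle $2\pi n\alpha$, so $\varphi(L(\alpha)) \subseteq SO(2)$ is at most countable; irrationality of $\alpha$ ensures that the matrices $M_{\alpha}^n$ for $n \ge 1$ are pairwise distinct and none of them equals $I$.

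The key step is to prove that $I$ is the unique accumulation point of $\varphi(L(\alpha))$. For each $b^n \in L(\alpha)$ there is an integer $q$ with $|\alpha[1+2\ell(n)] - q/n| < 1/n^2$, and the proof of the preceding lemma already contains the estimate $|\alpha - \alpha[1+2\ell(n)]| \le 1/(2n^2)$. Combining these via the triangle inequality yields $|n\alpha - q| < 3/(2n)$, so the rotation angle $2\pi n\alpha$ sits within $3\pi/n$ of the multiple $2\pi q$ of $2\pi$. Hence $\|M_{\alpha}^n - I\| = O(1/n)$ along large $n$ with $b^n\in L(\alpha)$, which forces $I$ to be the only possible accumulation point of $\varphi(L(\alpha))$. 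Since the preceding lemma guarantees infinitely many distinct denominators $n$ satisfying the defining inequality (for a fixed $n$ only $O(1)$ integers $q$ can work), $L(\alpha)$ is infinite, and the closure is
$$
\Closure{\varphi(L(\alpha))} \;=\; \varphi(L(\alpha)) \cup \{I\},
$$
a countably infinite set.

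To conclude, I would invoke the standard cell decomposition for semialgebraic sets: every semialgebraic subset of $\R^d$ is a finite disjoint union of semialgebraic cells, each either a single point or homeomorphic to an open cube of positive dimension. Consequently a semialgebraic set is either finite or of cardinality the continuum, which rules out our countably infinite closure. The main obstacle is the quantitative estimate forcing $M_{\alpha}^n \to I$ along $L(\alpha)$; once this is in hand, secondary accumulation points are automatically excluded and the argument collapses to pure cardinality.
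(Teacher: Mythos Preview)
Your proof is correct and shares its analytic core with the paper's: both establish, via the same triangle-inequality estimate $|n\alpha - q| \le \tfrac{3}{2n}$ for $b^n\in L(\alpha)$, that the set $\varphi(L(\alpha))$ is infinite with a single accumulation point (the identity in your version, the value $1$ after projection in the paper's). The difference lies only in the concluding step. The paper projects onto the $(1,1)$-entry $\cos 2\pi n\alpha$, observes that projections preserve semialgebraicity, and then uses the elementary one-dimensional fact that a semialgebraic subset of $\R$ is a finite union of intervals and therefore cannot consist of infinitely many isolated points plus a single limit point. You remain in $\R^{2\times 2}$ and invoke the cell decomposition theorem to obtain the finite-or-continuum cardinality dichotomy. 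Your route is a bit more general---it applies in any ambient dimension without having to find a convenient coordinate to project onto---whereas the paper's route is more elementary, trading the cell decomposition machinery for the explicit description of semialgebraic sets in dimension one.
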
 
\begin{proof}
Since the element in position $(1,1)$ of the matrix
$\varphi_{\alpha} (b^n)$ is $\cos 2\pi n \alpha$ and since the projection 
of a semialgebraic set is semialgebraic, it suffices 
to show that 
$$
 \Closure{\{ \cos 2\pi n \alpha \mid b^n\in L(\alpha)\}}
$$
is not semialgebraic.

Observe that $n\not=n'$ implies $\cos 2\pi n \alpha\not= \cos 2\pi n' \alpha$
since $\alpha$ is irrational. In particular, the set
$\{ \cos 2\pi n \alpha \mid b^n\in L(\alpha)\}$ is infinite. 

Now we verify that $1$ is the unique limit point. Indeed, by definition
$b^n\in L(\alpha)$ implies that for some
integer $q$ we have
$$
\left |\alpha[1 + 2\ell(n)]-\frac{q}{n}\right |\leq \frac{1}{n^2}
$$
For such an integer $q$ we have
$$
\left |\alpha -\frac{q}{n}\right |  \leq \left |\alpha -\alpha[1 + 2\ell(n)]\right |+\left |\alpha[1 + 2\ell(n)]- \frac{q}{n} \right |\\
   \leq  \frac{1}{2n^2}  + \frac{1}{n^2} =  \frac{3}{2}\times \frac{1}{n^2}
$$
 Consequently, $|n\alpha -q |\leq \frac{3}{2}\times  \frac{1}{n}$. Now we compute
$$
1\geq \cos 2\pi n\alpha = \cos 2\pi (n\alpha -q)=\cos 2\pi |n\alpha -q |\geq \cos  \frac{3\pi}{n}
$$
which proves that the closure $ \Closure{\{ \cos 2\pi n \alpha \mid b^n\in L(\alpha)\}}$  consists of a unique 
limit point and of infinitely many isolated points. This is not a semialgebraic set 
since the semialgebraic sets on the reals are finite unions of intervals.\qed
\end{proof} 

Now we state the main result of this section.

\begin{theorem}\label{alberto-christian}
  There is a  language $ L \subseteq   \Sigma^* $ and a morphism  $\varphi:\Sigma^*\rightarrow \Q^{n\times n}$, 
  assigning an orthogonal matrix to every word of $\Sigma^*$,   such that (i) $ L$  is the complement of a context-free language  (ii) 
  the topological closure   $\Closure{\varphi (L)}$ is not semialgebraic.
\end{theorem}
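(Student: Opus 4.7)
My plan is to combine Lemma \ref{le:non-semialgebraic} with a formal-language construction, in two stages.

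First, I would pin down the irrational $\alpha$ so that the rotation matrix $M_\alpha$ has rational entries. Setting $\cos 2\pi\alpha = 3/5$ and $\sin 2\pi\alpha = 4/5$ makes $M_\alpha$ an element of $O_2 \cap \Q^{2\times 2}$. By Niven's theorem the only rational multiples of $\pi$ whose cosine is rational have cosine in $\{0,\pm 1/2,\pm 1\}$; since $3/5$ is not in this list, $\alpha$ is irrational and Lemma \ref{le:non-semialgebraic} applies to this $\alpha$, with $\varphi_\alpha$ now taking values in $\Q^{2\times 2}$.

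Second, I need a language $L \subseteq \Sigma^*$ with $L^c$ context-free and a rational orthogonal morphism $\varphi$ so that $\Closure{\varphi(L)}$ is not semialgebraic. The essential obstruction here is Parikh's theorem: if $L^c$ is context-free then the Parikh image of $L^c$ is semilinear, and since semilinear subsets of $\N^k$ are closed under complement (they are the Presburger-definable sets), the Parikh image of $L$ itself is also semilinear. Two consequences follow. First, $L$ cannot be taken unary: over a unary alphabet context-free coincides with eventually periodic, whereas $L(\alpha)$ is not eventually periodic. Second, $\varphi$ cannot be abelian, since an abelian $\varphi$ is determined by Parikh images, so $\varphi(L)$ would fall inside a finite union of cosets of closed subgroups of a compact abelian Lie group, which is semialgebraic by Proposition \ref{pr:union-prod-phi} and Corollary \ref{cor:L-star-is-algebraic}. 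Hence $\varphi$ must be non-abelian, genuinely reading the order of letters.

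The construction therefore uses a standard co-context-free language --- for instance $\{ww : w \in \{a,b\}^*\}$, $\{a^n b^n c^n : n \geq 0\}$, or the language of valid accepting computations of a Turing machine deciding $L(\alpha)$, all of whose complements are well-known to be context-free --- together with a morphism $\varphi : \Sigma^* \to O_n \cap \Q^{n\times n}$ designed so that whenever $w \in L$ encodes an input $n \in L(\alpha)$, the matrix $\varphi(w)$ has $M_\alpha^n$ (up to a bounded factor) as one of its diagonal blocks. The main obstacle is exhibiting such a pair $(L,\varphi)$ concretely: one must simultaneously enforce the co-context-free property of $L$ by a formal-language argument and arrange $\varphi$ so that a single matrix entry of $\varphi(L)$ contains the set $\{\cos 2\pi n\alpha : n \in L(\alpha)\}$. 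Once such a pair is in hand, the conclusion is immediate by contradiction: if $\Closure{\varphi(L)}$ were semialgebraic, its projection onto that distinguished entry would also be semialgebraic (semialgebraicity is preserved under projection by Tarski--Seidenberg), contradicting Lemma \ref{le:non-semialgebraic}, since a countable subset of $\R$ with a single limit point and infinitely many isolated points cannot be a finite union of intervals.
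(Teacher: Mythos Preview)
Your plan correctly identifies the architecture of the paper's proof: pick $\alpha$ with $\cos 2\pi\alpha=3/5$, $\sin 2\pi\alpha=4/5$ so that $M_\alpha\in O_2\cap\Q^{2\times 2}$ while $\alpha$ is irrational (your Niven argument is fine), and then reduce to Lemma~\ref{le:non-semialgebraic} via a projection. Your Parikh discussion is also correct and explains why a unary $L$ or an abelian $\varphi$ cannot possibly work.

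The gap is that you name the ``main obstacle'' but do not resolve it. Two of your three candidate languages, $\{ww\}$ and $\{a^nb^nc^n\}$, are fixed and carry no information about $L(\alpha)$; no morphism $\varphi$ on $\{a,b\}$ or $\{a,b,c\}$ will make $\varphi(L)$ project to $\{\cos 2\pi n\alpha: b^n\in L(\alpha)\}$, so these examples are dead ends. For the third candidate (accepting computation histories of a Turing machine for $L(\alpha)$) the difficulty is that a morphism sees only the multiset of letters in a history, and the length of a history for input $b^n$ is some complicated function of $n$, not $n$ itself. So it is not at all automatic that a morphism can extract $M_\alpha^n$ from the history.

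The paper closes exactly this gap with a small but decisive trick. One modifies the Turing machine so that, after it has decided $b^n\in L(\alpha)$, it rewrites the tape as $ab^n$, enters a fresh state $\hat q$ used nowhere else, and sweeps the head leftwards across the $n$ cells while remaining in $\hat q$, halting at $a$. Thus the accepting history of $b^n$ contains \emph{exactly $n$} occurrences of the symbol $\hat q$. Defining $\varphi(\hat q)=M_\alpha$ and $\varphi(c)=I$ for every other symbol $c$ in the history alphabet yields $\varphi(\Hist(b^n))=M_\alpha^{\,n}$ on the nose, so $\varphi(\Hist(L(\alpha)))=\{M_\alpha^{\,n}: b^n\in L(\alpha)\}$ and Lemma~\ref{le:non-semialgebraic} finishes. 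That $\Hist(L(\alpha))$ is co-context-free is the standard fact (e.g.\ \cite[Lemma~8.6]{Hu}) that the valid-history language is an intersection of two linear deterministic context-free languages, hence the complement of a union of two context-free languages. Once you add this sweep-state device, your outline becomes a complete proof identical to the paper's.
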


\begin{proof} 
  Consider a one tape Turing machine implementing the 
following procedure for recognizing the language $L(\alpha)$ defined
in (\ref{eq:L-alpha}):

\medskip

\noindent Input $b^n$\\
$A\leftarrow  \alpha[1+2\ell(n)]$\\
$F\leftarrow 0$\\
for $q=1$ to $n$, if $|A-\frac{q}{n}|<\frac{1}{n^2}$ then $F\leftarrow 1$\\
if $F=1$ then write $ab^n$, \\
position the head on the rightmost occurrence
of $b$, \\
change to a new state $\hat{q}$, move the reading head to the leftmost 
cell while staying in state $\hat{q}$  \\
stop when reaching the occurrence $a$.

\bigskip We know that the  computation histories of a Turing machine, 
i.e., the set of sequences of configurations properly separated by a new symbol 
is, as a language,  the intersection of two linear context-free languages
(see, for instance, \cite{Hu}, Lemma 8.6).
Let $\Hist(b^n)$ be the history associated to the input $b^n$. 
Let $\Gamma$ be the disjoint union of the symbols comprising the 
input and tape alphabets along with the states including the
special state $\hat{q}$. 
With $\alpha= \arctan \frac{3}{4}$ we get the orthogonal matrix
$$
M_{\alpha}=
\left(
\begin{array}{cc}
\frac{3}{5}  & \frac{4}{5} \\
-\frac{4}{5} & \frac{3}{5}
\end{array}
\right)
$$
Define the morphism $\varphi:\Gamma^*\rightarrow O_{2}$ by 
$$
\varphi(c): =
\left\{
\begin{array}{rl}
I & \mbox{ if } c\in \Gamma\setminus \{\hat{q}\} \\
M_{\alpha}  &  \mbox{ if } c=\hat{q}
\end{array}
\right.
$$
By applying the result mentioned above to the Turing machine implementing the procedure for 
recognizing the language $L(\alpha)$, we have that there  exist two linear context-free languages $L_ 1$ and $ L_ 2 $
such that 
$$
\Hist(L_{\alpha}) = L_ 1\cap L_ 2 = (L_ 1^{c} \cup L_ 2^{c})^{c} 
 $$ 
Since $L_ {1}$ and $ L_ {2}$  are linear and deterministic context-free,
$L_ {1}^{c} \cup L_ {2}^{c}$ is (not necessarily  deterministic)
context-free and thus $\Hist(L_{\alpha})$  is the complement of a  context-free language. 
But then 
$$
\varphi(\Hist(L_{\alpha}))=\{M_{\alpha}^n\mid b^n\in L_{\alpha}\}
$$
We conclude by applying Lemma \ref{le:non-semialgebraic}.
\qed \end{proof}

\end{document}